\documentclass[12pt]{article}
\usepackage[letterpaper, margin=1in]{geometry} 
\usepackage{graphicx}
\usepackage{hyperref}
\usepackage{amsmath}
\usepackage{amssymb}
\usepackage{nicefrac}
\usepackage{comment}
\usepackage{enumitem}
\usepackage{bm}
\usepackage{amsthm}
\theoremstyle{plain}
\newtheorem{theorem}{Theorem}[section]
\newtheorem{lemma}{Lemma}[section]      
\newtheorem{corollary}[theorem]{Corollary}  
\theoremstyle{definition}
\newtheorem{definition}{Definition}[section]  
\newtheorem{remark}{Remark}[section]          
\AtEndEnvironment{remark}{\hfill$\triangle$}
\usepackage{listings}
 
\title{Token Complexity Theory for AI-Augmented Computing}

\author{Jie Wang\footnote{Richard Miner School of Computing and Information Sciences, University of Massachusetts, Lowell, MA 01854, USA.} \\
jie\rule{0.4em}{0.4pt}wang@uml.edu}

\date{} 

\begin{document}

\maketitle

\begin{abstract}
AI-augmented computing delegates natural language queries, code 
generation requests, and other open-ended tasks to a cluster of AI 
models that processes queries and generates responses. This paradigm 
introduces a resource dimension that neither classical time nor space 
complexity captures: the cost of sending queries to and 
receiving responses from such a cluster. We introduce token 
complexity, a formal resource measure defined as the minimum expected 
token cost to achieve a specified level of output quality on a task, 
and develop a taxonomy classifying AI systems by the strength of 
their probabilistic properties.

We develop token complexity within the framework of AI-Oracle Turing 
machines, in which a probabilistic Turing machine interacts with a 
stochastic oracle via dedicated query and response tapes. We prove 
basic theorems establishing that token complexity behaves as expected: 
monotonicity (higher quality costs more tokens), convexity (quality 
improvements become progressively more expensive), price sensitivity 
(small price changes produce bounded cost changes), and 
price-relativity of task ordering (the token complexity ordering of 
tasks can reverse depending on the query-to-response cost ratio). We prove that the complexity frontier, defined as the set of all 
feasible resource bounds in tokens, time, and space, is non-empty, 
upward-closed, and convex.
\end{abstract}


\section{Introduction}\label{Introduction}

AI-augmented computing delegates open-ended tasks, including 
reasoning, parsing ambiguous text, generating code, evaluating 
arguments, summarizing documents, and producing digital artifacts, 
to AI systems via API calls. An AI system is a cluster of large language models (LLMs), 
in-house trained models, or other stochastic systems, with a control 
mechanism that routes each query to the best-suited model and returns 
model's response. 

Unlike a deterministic subroutine, an AI system returns a response 
drawn from a distribution rather than a fixed answer, introducing 
stochasticity into the computation. This 
delegation pattern is now the default paradigm for coding assistants, 
document processors, autonomous agents, and automated research workflows.

We formalize this pattern by treating an AI system as a stochastic oracle: a black box that accepts a query string and returns a response drawn from a distribution. We use \textit{oracle} as shorthand for stochastic oracle throughout, unless otherwise stated.

Interacting with stochastic oracles yields stochastic quality and incurs
costs of processing queries and generating responses, which differ 
from classical oracle Turing machines in three ways.

\begin{enumerate}
\item \textbf{Stochastic quality.} Classical deterministic oracles give the same answer to the same 
query string every time, with no notion of semantic equivalence 
between differently written queries of the same meaning. A 
stochastic oracle, by contrast, considers semantics through its 
response distributions: it may give similar responses to differently 
written queries of the same or closely related meaning, and different 
responses to queries that are syntactically similar but semantically 
distinct. Errors vary by query, correlate across calls, and are difficult 
to bound uniformly. Spending more query tokens may improve the probability of a 
higher-quality outcome, but does not guarantee it; response tokens 
are incurred passively as a cost of the oracle's answer.

\item \textbf{Oracle call cost.} Token cost is incurred precisely 
at the interface between the algorithmic component and the 
oracle---the boundary where probabilistic computation hands off to 
stochastic computation. This cost has two components: query tokens 
sent to the oracle and response tokens received from it. The total 
token cost separates raw token counts from their monetary cost, a 
distinction that proves essential: the relative cost of two tasks 
can reverse depending on the query-to-response cost ratio.

\textbf{Asymmetric cost.} A stochastic oracle typically incurs 
higher cost generating a re\-sponse than processing a query. Intuitively, 
processing a query is a frontend operation---interpreting and 
encoding the input---while generating a response is a backend 
operation involving reasoning and inference. Commercial LLMs make 
this cost explicit through differential pricing; open-source LLMs 
incur the same asymmetric cost implicitly through compute, without 
expressing it as a price. There is no classical analog of a 
complexity measure where the same token has a different cost 
depending on whether it is ingress or egress.
\end{enumerate}

These differences demand a new framework, one that not only measures the token cost of solving a given task but reveals how to reduce it, when reduction is possible, and when it is not.

The oracle boundary separates two fundamentally different kinds of computation. On the algorithmic side: probabilistic (including deterministic as a special case), mechanical, syntactic, carried out according to fixed rules and a random source. On the oracle side: stochastic, semantic, providing interpretation, inference, and reasoning that classical algorithms cannot perform without extensive task-specific engineering, if at all. The boundary is precisely where algorithmic computation suspends and oracle computation begins. Token complexity is therefore a measure of how much oracle computation a task requires, not asymptotically but precisely, as the minimum weighted token cost to achieve a given quality on a given task.

The practical stakes are concrete. Consider an application making one
million LLM calls per day, each sending 500 input tokens at USD 0.003
per thousand and receiving 200 output tokens at USD 0.012 per thousand.
At these costs, each output token costs four times more than each input
token. This asymmetry is typical of current commercial LLM API pricing, and where
it holds, the greater leverage lies in reducing response length rather
than query length. Understanding token complexity is therefore both theoretically 
necessary and practically consequential.\footnote{In a companion 
paper, we present case studies demonstrating AOTM designs that 
reduce token costs and design choices that depend on the 
query-to-response cost ratio.}

We assume familiarity with computational complexity theory at the 
level of Du and Ko~\cite{DuKo2001}, Arora and 
Barak~\cite{AroraBarak2009}, and Homer and 
Selman~\cite{HomerSelman2011}, and with AI systems such as large 
language models accessed via API calls.

The rest of this paper is organized as follows. Section~\ref{Computational 
Model} introduces the AI-Oracle Turing Machine model, in which a 
probabilistic Turing machine interacts with a stochastic oracle via a 
dedicated query-response interface. Section~\ref{Oracle Taxonomy} develops 
the oracle taxonomy, classifying oracles by the strength of their 
probabilistic properties and identifying Global Bounded Error as the aspirational target 
for LLM technology. Section~\ref{Token Complexity} defines token count 
and token cost, introduces token complexity as the minimum expected token 
cost to complete a task at a desired quality level, and presents the 
complexity frontier and tradeoff surface as the geometric objects 
characterizing the resource requirements of a task. 
Section~\ref{Properties of Token Complexity} shows that token complexity 
is monotone in quality, convex in quality, sensitive in price, and orders tasks in 
a way that depends on the query-to-response cost ratio. 
Section~\ref{Properties of the Complexity Frontier} shows that 
the complexity frontier is non-empty, upward-closed, and convex. Section~\ref{Open Problems} states central open problems and 
Section~\ref{Conclusion} concludes.

\section{AI-Augmented Computational Model}
\label{Computational Model}

We formulate the AI-augmented computational model in three steps. We
first introduce the AI-Oracle Turing Machine~\cite{Wang2025},
which extends a multi-tape probabilistic Turing machine with a dedicated
oracle interface. We then define the oracle as a family of response
distributions. Finally, we define the notion of a task, which
generalizes the classical decision problem and search problem as the
object of study for AI-augmented computation.

\subsection{AI-Oracle Turing Machines}

Throughout this paper, a fixed encoding scheme over a finite alphabet 
$\Sigma$ is assumed. All objects---strings, queries, responses, and 
mathematical objects---are represented as elements of $\Sigma^*$ in a 
standard way. Two strings that differ character by character are treated 
as distinct objects, regardless of semantic equivalence. This convention 
makes $\Sigma^*$ a well-defined universal domain for the framework. The 
length of a string $x \in \Sigma^*$, denoted $|x|$, is the number of 
symbols in $x$.

An AI-Oracle Turing Machine (AOTM) is a pair $\mathcal{M} = (M, 
\mathcal{O})$, where:

\begin{itemize}
  \item $M$ is a multi-tape probabilistic Turing machine (PTM) handling all 
  mechanical computation, consisting of an input tape, an output tape, 
  one or more work tapes, a read-only random tape containing an infinite 
  sequence of uniformly random bits, a query tape, and a response tape. 
  Deterministic Turing machines are the special case where the random 
  tape is never used. For simplicity, we omit the formal specification 
  of $M$'s finite set of states and transition function. For any given 
  task instance, $M$'s computation is determined solely by the instance 
  and its random tape; it does not have access to any data used to train 
  $\mathcal{O}$, and therefore cannot replicate the oracle's stochastic 
  capabilities. The PTM $M$ represents the \textit{algorithmic} side of $\mathcal{M}$.

  \item $\mathcal{O}$ is a stochastic oracle, formally defined as a 
family of response distributions $\mathcal{O} = \{\mathcal{D}_q : 
q \in \Sigma^*\}$, where $\mathcal{D}_q$ is the probability 
distribution over responses that $\mathcal{O}$ generates when given 
query $q \in \Sigma^*$. Access to $\mathcal{O}$ is exclusively 
through the dedicated query and response tapes. $\mathcal{O}$ 
represents the \textit{stochastic} side of $\mathcal{M}$.
\end{itemize}

\begin{remark}
This definition of oracle captures two practical abstractions of reality. 
First, the stochastic behavior of $\mathcal{O}$---its response 
distributions $\{\mathcal{D}_q\}$---is determined by the cluster 
of AI models used to solve a task, which may include LLMs, 
in-house trained models, or other stochastic systems. Second, the control mechanism reflects real AI 
infrastructure in practice: a load balancer, router, or 
orchestration layer directs queries to appropriate members based 
on task type, without exposing this routing to the calling 
application. Both are absorbed into the family $\{\mathcal{D}_q\}$.
\end{remark}

On each oracle call, $\mathcal{M}$ writes a query string $q \in \Sigma^*$ 
to the query tape, which is sent to $\mathcal{O}$. The oracle draws a 
response $r \in \Sigma^*$ from its distribution $\mathcal{D}_q$ and 
writes it to the response tape. $M$ reads the response on the response 
tape and continues its computation. Each oracle call is counted as a 
single computational step from $M$'s perspective: its computation simply 
suspends at each oracle call and resumes when the response arrives. The 
algorithmic side of the computation, namely $M$'s transitions, is 
measured in the standard way. The stochastic side of the computation is measured by the number of tokens crossing the oracle boundary and their associated costs. This 
division is formalized by the three computational resources of token 
complexity, time complexity, and space complexity defined in 
Section~\ref{Token Complexity}.

\begin{remark}
We assume that the time and space cost of an oracle call are bounded 
by a function of the query and response lengths---in practice 
near-linear for most deployed architectures---which are already 
captured by the token count. Under this assumption, the token cost 
is the dominant additional resource introduced by oracle interaction, 
beyond the time and space of $M$'s own computation. This assumption 
is well-supported empirically: for a deployed LLM with fixed 
architecture, computation time is bounded by a function of the total 
token length of the query and response, since all architecture 
parameters are fixed constants. For a cluster 
of AI models, the control mechanism's overhead depends on the number 
of models in the cluster; once the AOTM 
is fixed, the cluster size is fixed, making this overhead a constant 
relative to the query. The token count thus serves as a proxy for 
oracle time and space, further justifying it as the key resource 
measure for AI-augmented computing.
\end{remark}

\begin{remark}
The number of oracle calls, referred to as \textbf{turns} in practice, 
is not tracked as a separate resource: it is implicitly bounded above by 
the token count (since each call contributes at least one query token) 
and by the transition count (since initiating and receiving each call 
requires transitions of $M$). Studies focused on round complexity may 
wish to track turns explicitly as a fourth resource.
\end{remark}

\subsection{Tasks}\label{Tasks}

The object of computation for $\mathcal{M}$ is a \textbf{task}, defined 
as a tuple:
\[
T = (X, Y, S, \mathcal{D}_X),
\]
where $X \subseteq \Sigma^*$ is the input space, $Y \subseteq \Sigma^*$ 
the output space, $S: X \times Y \rightarrow [0,1]$ an explicitly 
defined score function, and $\mathcal{D}_X$ an input distribution. 
A specific input $x \in X$ 
is called a \textbf{task instance}.

The score function measures how well $\mathcal{M}$'s final output 
$y = \mathcal{M}(x) \in Y$ answers task instance $x$: $S(x, y) = 1$ 
represents a fully correct answer, $S(x, y) = 0$ represents complete 
failure, and intermediate values represent partially correct answers. The continuous 
range $[0,1]$ allows token complexity to capture the cost of achieving 
any quality level, not just the threshold between correct and incorrect. 

The score function takes one of four explicit forms depending on the 
task: verifier-based, where a program checks correctness against known 
ground truth; rubric-based, where a weighted rubric assigns a score; 
LLM-as-judge, where a language model scores the output, though 
self-critique variants are susceptible to correlated errors between 
generation and evaluation; or human annotation, where human preference 
defines correctness.

Two tasks that share the same $X$, $Y$, and score function but differ 
in $\mathcal{D}_X$ are formally distinct tasks with potentially 
different token complexities. In practice, $\mathcal{D}_X$ is 
approximated by the empirical distribution of observed inputs and need 
not be known exactly. It serves as a formal construct for defining 
expected token complexity, analogous to the role of input distributions 
in average-case complexity analysis~\cite{Levin1986, Wang1997}.

\begin{remark}
The score function $S$ is essential to the task definition: it is 
what makes the notion of solving a task well-defined. Without $S$, 
there is no criterion for what counts as a solution at quality level 
$\theta$, and token complexity---defined as the minimum token cost 
to achieve $\mathbb{E}[S(x, y)] \geq \theta$ for a given 
$\theta$---becomes trivial.

The score function $S$ is not part of the AOTM's 
definition; it is an external criterion applied to the final output, 
evaluating it against the task instance. $M$ may issue oracle 
queries whose responses feed back into its computation before 
producing the final output; how these are evaluated and used is 
determined by $M$'s transition function, not by $S$.
\end{remark}

\begin{remark}
The score function $S$, while not part of the AOTM's definition, 
is central to the design of an AOTM for solving a given task. Any 
machine $M$ designed to solve task $T$ must optimize for $S$: the 
querying strategy, oracle call patterns, and output construction 
are all chosen with the goal of maximizing 
$\mathbb{E}_{x \sim \mathcal{D}_X}[S(x, \mathcal{M}(x))]$. The 
separation between $S$ and the AOTM reflects the distinction 
between the evaluation criterion (what counts as a good answer) 
and the computational mechanism (how the answer is produced).
\end{remark}

\subsection{Conceptual Distinctions}\label{Notes on the Framework}

\textbf{Three-level structure.} The framework has three distinct levels. 
At Level 1, the task defines the problem to be solved and how solution 
quality is measured. At Level 2, the oracle is defined as a family of response distributions over query strings, independently of any task or machine design. 
At Level 3, the AOTM 
transforms a task instance into a solution according to $M$'s transition 
function, possibly querying the oracle in the process. Prompt 
engineering, chain-of-thought decomposition, verification loops, and 
other interaction patterns all operate at Level 3. Token complexity is 
a property of the Level 3 computation, not of the oracle or task alone. 
This separation ensures that oracle behavior and machine computation are 
characterized independently.

\textbf{Syntax, not semantics.} All objects in the framework---task 
instances, queries, responses, and outputs---are identified by their 
string representation, not by their semantic content. Two strings 
are distinct if and only if they differ character by character, 
regardless of whether they express the same underlying meaning. The 
oracle may assign different responses to two syntactically distinct 
but semantically equivalent queries. This is a deliberate choice: 
the oracle's response depends on the query string, not on its 
meaning. Semantics are not absent from the framework, however; they 
are captured through the oracle's response distributions 
$\{\mathcal{D}_q\}$, which may assign similar distributions to 
queries of similar meaning.

\textbf{Task instances and queries are distinct objects.} A task 
instance is the problem to be solved; it is the input against which the 
final output of $\mathcal{M} = (M, \mathcal{O})$ is evaluated by the 
score function. A query is a string $M$ sends to the oracle during its 
computation, generated according to $M$'s transition function. The same 
instance can generate many different queries depending on how $M$ 
chooses to ask: it may send the task verbatim, rephrase it, decompose 
it into sub-questions, add instructions or context, or construct 
something entirely different. This separation is intentional: it allows 
query formulation strategies to be studied independently of the task 
definition.

\section{Oracle Taxonomy}\label{Oracle Taxonomy}

The oracle taxonomy classifies oracles independently of any AOTM or 
task, based on conditions on their response distributions, assessed 
through quality measures applied to query-response pairs. The taxonomy 
(see Figure~\ref{figure1}) consists of six conditions: Measurable 
(baseline), Guaranteed Quality (GQ), Fast Error Decay (FED), Distinct 
Expected Quality (DEQ), Bounded Error (BE), and Global Bounded Error 
(GBE). GBE implies all others; BE implies Measurable, GQ, and FED but 
not DEQ; GQ, FED, and DEQ are pairwise non-implying; and Measurable is 
implied by all other conditions. These structural relationships are 
established formally in 
Section~\ref{Structural Relations Between Conditions}.

\begin{figure}[!htbp]
\centering
\includegraphics[width=0.7\linewidth]{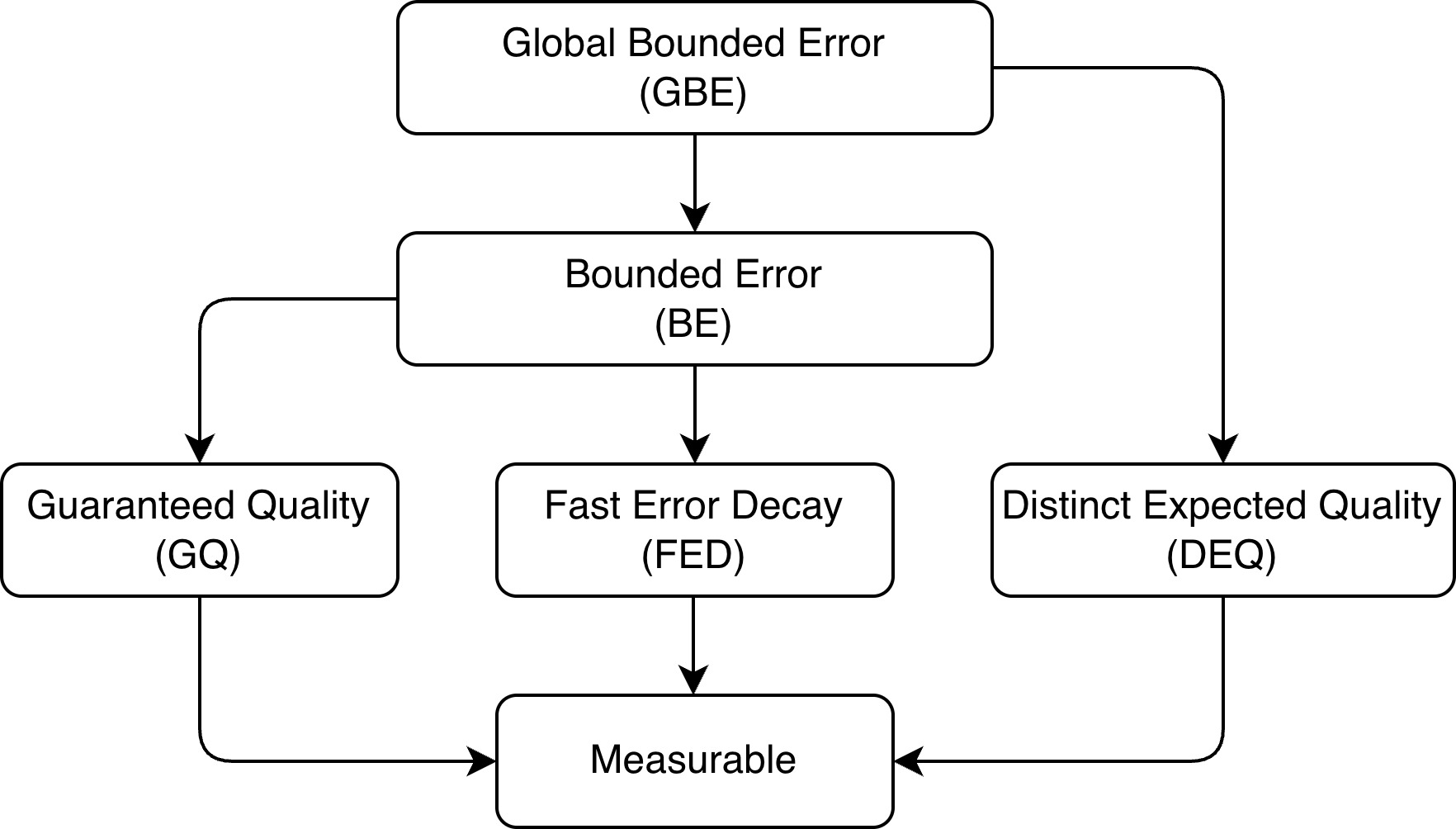}
\caption{Oracle taxonomy, where an arrow from A to B means A implies B, namely, every oracle satisfying A also satisfies B.}
\label{figure1}
\end{figure}

Throughout this paper, \(\delta_r\) denotes the Dirac measure at \(r\);
that is, the probability distribution on \(\Sigma^*\)\textit{ }that
assigns mass 1 to \(r \in \Sigma^*\) and mass 0 to every other string in
\(\Sigma^*\).

\subsection{Oracle Conditions}\label{Oracle Conditions}

For each condition we explain what it means, why it matters for token 
complexity, and how it relates to the other conditions in the taxonomy.

\begin{definition}[Measurable]
An oracle $\mathcal{O}$ is \textbf{Measurable} if for every query 
$q \in \Sigma^*$, $\mathcal{D}_q$ is a well-defined probability measure 
on $\Sigma^*$.
\end{definition}

\textit{What it means:} The oracle has a genuine probability distribution 
over responses for each query string. Since $\Sigma^*$ is countable, 
every oracle in this framework is automatically measurable in the 
measure-theoretic sense; Measurable is stated explicitly to establish 
the mathematical baseline and to support the logical structure of the 
taxonomy.

\textit{Why it matters:} This is the weakest useful condition. It rules 
out pathological oracles whose behavior has no well-defined probabilistic 
description, for which expected quality and token cost are undefined.

\begin{definition}[Guaranteed Quality]
An oracle $\mathcal{O}$ satisfies \textbf{Guaranteed Quality (GQ)} with 
respect to a quality measure $S: \Sigma^* \times \Sigma^* \rightarrow 
[0,1]$ if there exists a constant $\mu_0 \in (0,1]$ such that 
$\mathbb{E}_{r \sim \mathcal{D}_q}[S(q, r)] \geq \mu_0$ for all 
$q \in \Sigma^*$.
\end{definition}

\textit{What it means:} For the given quality measure $S$, the oracle 
guarantees a minimum expected quality $\mu_0 > 0$ on every query string, 
uniformly across all queries.

\textit{Why it matters:} This is the weakest condition guaranteeing that 
the oracle is useful: it ensures that expected quality is bounded away 
from zero on every query, which is sufficient to define a non-trivial 
quality constraint and guarantee the feasibility of token complexity at 
some positive threshold.

\begin{definition}[Fast Error Decay]
An oracle $\mathcal{O}$ satisfies \textbf{Fast Error Decay (FED)} with 
respect to a quality measure $S: \Sigma^* \times \Sigma^* \rightarrow 
[0,1]$ if there exist a decay rate $\lambda > 0$ and a constant $C > 0$ 
such that for any query $q \in \Sigma^*$ and any two calls $i \neq j$:
\[
\bigl|\mathrm{Cov}_{r_i, r_j \sim \mathcal{D}_q}
\bigl[S(q, r_i),\, S(q, r_j)\bigr]\bigr| \leq C \cdot e^{-\lambda m}
\]
where $m = |j - i| - 1$ is the number of intervening oracle calls.
\end{definition}

\textit{What it means:} Under the given quality measure $S$, repeated 
calls to the oracle on the same query string produce correlated scores: 
the covariance between scores on nearby calls decays exponentially with 
their separation, at a rate governed by the correlation time 
$t_\text{corr} = 1/\lambda$. After $k = t_\text{corr}$ intervening 
calls the covariance has decayed by a factor of $1/e$, and after 
$k = n \cdot t_\text{corr}$ intervening calls by a factor of $e^{-n}$. 
A small $t_\text{corr}$ means scores decorrelate quickly; a large 
$t_\text{corr}$ means scores remain significantly correlated across many 
calls. The limiting case $\lambda \to \infty$ (equivalently 
$t_\text{corr} \to 0$) corresponds to calls that are asymptotically 
uncorrelated at any separation $m \geq 1$.

\textit{Why it matters:} A large $t_\text{corr}$ limits the benefit of 
repeated querying: correlated calls cannot be treated as independent 
samples---sending the same query ten times gives ten correlated answers, 
not ten independent ones---so repeated sampling does not reduce error. 
FED is a condition on the oracle's stateless response mechanism; a 
stateful application such as a chatbot that maintains conversational 
history may violate FED, as errors can persist and compound across turns.

\begin{remark}
FED bounds the rate at which the covariance between two calls on the 
same query decays as their separation $|j - i|$ increases. It says 
nothing about the expected quality of individual calls, which depends 
on $\mathcal{D}_q$ and $S$, nor about any trend, improvement, or 
ordering in quality across calls. As $|j - i| \to \infty$, the two 
calls become asymptotically uncorrelated---but not necessarily 
independent, since zero covariance does not imply independence in 
general.
\end{remark}

\begin{definition}[Distinct Expected Quality]
An oracle $\mathcal{O}$ satisfies \textbf{Distinct Expected Quality 
(DEQ)} with respect to a quality measure $S: \Sigma^* \times \Sigma^* 
\rightarrow [0,1]$ if for any two queries $q, q' \in \Sigma^*$, 
$\mathcal{D}_q \neq \mathcal{D}_{q'}$ implies $\mathbb{E}_{r \sim 
\mathcal{D}_q}[S(q, r)] \neq \mathbb{E}_{r' \sim \mathcal{D}_{q'}}
[S(q', r')]$.
\end{definition}

\textit{What it means:} Distinct response distributions produce distinct 
expected quality levels under $S$: the map from response distributions 
to expected quality values is injective. DEQ ensures that oracle quality 
is observably distinguishable across queries---if the oracle responds 
differently to two queries, this difference is reflected in their 
expected quality levels.

\textit{Why it matters:} DEQ enables adaptive token optimization: 
without it, two queries could have distinct response distributions 
yet identical expected quality levels, making it impossible to 
distinguish their quality from observed responses alone and 
therefore impossible to optimize query formulation based on quality 
feedback. With DEQ, estimating $\mathbb{E}_{r \sim \mathcal{D}_q}
[S(q, r)]$ from observed responses becomes possible, enabling the 
machine to adapt its querying strategy: spending more tokens on 
low-quality queries, routing them differently, or applying 
verification steps.

The opposite of DEQ---response quality independent of response 
distributions---is desirable for deployment stability: an oracle 
that achieves the same expected quality regardless of which internal 
model handles a query is robust to model swaps and prompt 
variations. DEQ and its opposite thus serve different goals: DEQ 
enables optimization and calibration, while its opposite enables 
robustness and predictability. The taxonomy includes DEQ because 
the paper's focus is on optimizing token use, for which 
distinguishability of quality is essential.

\begin{definition}[BE]
An oracle $\mathcal{O}$ satisfies \textbf{Bounded Error (BE)} with 
respect to a binary quality measure $S: \Sigma^* \times \Sigma^* 
\rightarrow \{0, 1\}$ if for every query $q \in \Sigma^*$:
\begin{enumerate}[noitemsep, topsep=5pt]
    \item $\mathbb{E}_{r \sim \mathcal{D}_q}[S(q, r)] \geq 1/2 + \eta$ 
    for some fixed $\eta > 0$ (bounded-error success);
    \item repeated calls on the same query $q$ produce independent draws 
    from $\mathcal{D}_q$ (independence across calls).
\end{enumerate}
\end{definition}

\textit{What it means:} The oracle answers every binary task instance 
correctly with probability at least $1/2 + \eta$ for some fixed 
$\eta > 0$, with repeated calls that are independent across repetitions.

\textit{Why it matters:} BE oracles admit probability amplification for 
success: since calls are independent and success probability is bounded 
away from $\tfrac{1}{2}$, majority voting over $n$ independent calls 
achieves error at most $e^{-2\eta^2 n}$ by the Chernoff 
bound~\cite{Gill1977}, driving success probability arbitrarily close 
to $1$ exponentially in $n$.

\begin{remark}
The BE condition is the oracle analog of the 
probabilistic component of the classical BPP complexity class 
\cite{Gill1977}, which classifies problems solvable by a 
probabilistic polynomial-time Turing machine with bounded-error 
success probability. BE retains the two probabilistic properties 
that are meaningful for a black-box oracle: bounded-error success 
probability and independence of repeated calls.
\end{remark}

\begin{definition}[Global BE]
An oracle $\mathcal{O}$ satisfies \textbf{Global Bounded Error (GBE)} 
with respect to a binary quality measure $S: \Sigma^* \times \Sigma^* 
\rightarrow \{0,1\}$ if it satisfies both BE and DEQ with respect to $S$.
\end{definition}

\textit{What it means:} GBE combines probability amplification for 
success (BE) with the guarantee that oracle quality is distinguishable 
across queries (DEQ): distinct response distributions produce distinct 
expected quality levels, and repeated calls can be amplified to drive 
success probability arbitrarily close to $1$. Since BE implies 
Measurable, GQ, and FED but not DEQ (see 
Section~\ref{Structural Relations Between Conditions}), a GBE 
oracle---which additionally requires DEQ---satisfies all five conditions 
in the taxonomy.

\textit{Why it matters:} GBE is the aspirational target for LLM 
technology. BE alone does not imply DEQ---an oracle could have uniform 
success probability across all queries, making it impossible to 
distinguish oracle quality across different query strings. GBE rules 
this out: it requires both the probability amplification for success 
of BE and the ability to distinguish oracle quality across queries.

Among these six conditions, Measurable and GQ are pointwise conditions, 
each requiring every response distribution $\mathcal{D}_q$ to satisfy 
a property independently of the others. FED is a relational condition: 
it constrains the rate at which score covariance decays in magnitude 
across repeated calls on the same query. DEQ is a global condition: 
distinct queries must produce distinct expected quality levels. BE is 
simultaneously pointwise and relational. GBE is the strongest condition 
in the taxonomy, being simultaneously pointwise, relational, and global.

\begin{remark}[Lipschitz condition]
Lipschitz smoothness of the map $q \mapsto \mathcal{D}_q$ under an 
embedding $\varphi: \Sigma^* \rightarrow V$, where $V$ is a metric 
space, is another natural condition on stochastic oracles. It requires:
\[
d_{\mathrm{TV}}(\mathcal{D}_q, \mathcal{D}_{q'}) \leq \lambda \cdot 
d_V(\varphi(q), \varphi(q')) \quad \text{for all } q, q' \in \Sigma^*,
\]
where $\lambda > 0$ is a Lipschitz constant, $d_V$ is any metric on $V$, 
and $d_{\mathrm{TV}}$ is the total variation distance. We exclude it 
from the taxonomy because the right class of embeddings for AI oracle 
analysis remains an open question, as does the relationship between 
Lipschitz smoothness and the other conditions in the taxonomy under 
natural embeddings.

Empirically, LLMs appear to fail Lipschitz smoothness: small changes 
in query wording can produce large shifts in response distribution, 
consistent with an unbounded Lipschitz constant. A formal witness would 
require specifying $\varphi$ and exhibiting a concrete pair $(q, q')$ 
that violates the bound, which we leave as an open problem.
\end{remark}

\subsection{Structural Relations Between Oracle Conditions}
\label{Structural Relations Between Conditions}

The following lemmas establish the logical relationships between the 
six oracle conditions, justifying the taxonomy structure shown in 
Figure~\ref{figure1}. The Measurable condition is the baseline: it is 
implied by all other conditions, since each requires well-defined 
response distributions $\mathcal{D}_q$ to be meaningful.

\begin{lemma}[Measurable is necessary for all conditions]
Every oracle satisfying any of GQ, FED, DEQ, BE, or GBE also satisfies 
Measurable.
\end{lemma}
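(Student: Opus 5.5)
The plan is to check each condition in turn and show that its defining clause already presupposes, for every query $q$, that $\mathcal{D}_q$ is a well-defined probability measure on $\Sigma^*$. I will also use the observation made after the definition of Measurable: since $\Sigma^*$ is countable, any family of distributions $\{\mathcal{D}_q\}$ in the framework is automatically measurable with respect to the discrete $\sigma$-algebra. The argument is therefore a definitional unpacking, split into cases by condition.

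For GQ, the definition requires $\mathbb{E}_{r \sim \mathcal{D}_q}[S(q,r)] \geq \mu_0$ for \emph{all} $q \in \Sigma^*$. The expectation of the bounded function $r \mapsto S(q,r) \in [0,1]$ is only defined, and the inequality only meaningful, if $\mathcal{D}_q$ is a probability measure; this holds for every $q$, so Measurable follows. For FED, the covariance bound is again quantified over every $q$. The covariance of $S(q,r_i)$ and $S(q,r_j)$ with $r_i, r_j \sim \mathcal{D}_q$ requires the marginal $\mathcal{D}_q$ (and a joint law with those marginals) to be a probability measure, so each $\mathcal{D}_q$ is one. Finiteness of the covariance is automatic because $S$ is bounded.

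DEQ is the one case that needs care. Its clause is an implication between expectations of pairs $q,q'$, and it is quantified over all pairs. For any $q$, I will take $q' = q$ (or any other query): the expression $\mathbb{E}_{r\sim\mathcal{D}_q}[S(q,r)]$ appears in the condition and must be well-defined for the implication to have a truth value. The alternative is to appeal directly to the countability remark, which makes Measurable hold for every oracle in the framework. I expect this step to be the main (mild) obstacle, since the implication could be read as vacuously true when $\mathcal{D}_q = \mathcal{D}_{q'}$. I will therefore rely on the countability argument as the primary justification and treat the presupposition argument as a secondary one.

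For BE, clause (1) contains an expectation under $\mathcal{D}_q$ for every $q$, and clause (2) speaks of independent draws from $\mathcal{D}_q$; both presuppose a probability measure. GBE is defined as BE together with DEQ, so it inherits Measurable from the BE case.
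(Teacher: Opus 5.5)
Your proposal is correct and essentially matches the paper. The paper gives no formal proof. It relies on two remarks: each condition presupposes that every $\mathcal{D}_q$ is a well-defined distribution, and (after the Measurable definition) every oracle in the framework is automatically Measurable because $\Sigma^*$ is countable. Your case-by-case unpacking covers the first remark, and falling back on the countability observation for DEQ is sound; that observation alone in fact settles all five cases at once.
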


In all constructions below, each $\mathcal{D}_q$ is defined as a 
mixture of Dirac measures on $\Sigma^*$, so $\mathcal{O}$ is Measurable 
in every case. We do not repeat this argument for each construction.

\begin{lemma}[Pairwise non-implication of intermediate conditions]
Among GQ, FED, and DEQ, no condition implies any other with respect 
to the same quality measure $S$. Namely, for each pair there exist an 
oracle $\mathcal{O}$ and a quality measure $S$ such that $\mathcal{O}$ 
satisfies one condition under $S$ but fails the other under the same $S$.
\end{lemma}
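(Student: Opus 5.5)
The plan is to exhibit six explicit counterexamples, one for each ordered pair (A,B) with A,B distinct elements of \{GQ, FED, DEQ\}. Each oracle $\mathcal{O}$ satisfies A under a quality measure $S$ but fails B under the same $S$. Every $\mathcal{D}_q$ will be a finite mixture of Dirac measures, so Measurable is automatic. I will use a binary measure $S(q,r)\in\{0,1\}$ that depends only on $r$. This makes expected quality and covariances easy to compute.

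\textbf{Reading FED.} FED is stated as a bound on covariance across calls. The oracle must therefore specify a joint law for the sequence of calls on a query $q$, beyond the marginal $\mathcal{D}_q$. Two constructions will be used throughout:
\begin{itemize}
\item \emph{Independent calls}: covariance is $0$ for $i\neq j$, so FED holds with any $C,\lambda$.
\item \emph{Perfectly correlated calls}: the first call draws $r\sim\mathcal{D}_q$ and every later call repeats $r$. Then $\mathrm{Cov}=\mathrm{Var}_{\mathcal{D}_q}[S]$ at every separation $m$. If this variance is positive, no bound $C e^{-\lambda m}$ holds for large $m$, so FED fails.
\end{itemize}

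\textbf{GQ versus FED.}
\begin{itemize}
\item GQ $\not\Rightarrow$ FED: take $\mathcal{D}_q=\tfrac12\delta_{a}+\tfrac12\delta_{b}$ for all $q$, with $S(\cdot,a)=1$ and $S(\cdot,b)=0$, and perfectly correlated calls. Then $\mu_0=\tfrac12$ gives GQ, while the constant covariance $\tfrac14$ violates FED.
\item FED $\not\Rightarrow$ GQ: take $\mathcal{D}_q=\delta_b$ for all $q$. Expected quality is $0$, so GQ fails. Covariance is $0$, so FED holds.
\end{itemize}

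\textbf{DEQ versus the other two.}
\begin{itemize}
\item GQ $\not\Rightarrow$ DEQ and FED $\not\Rightarrow$ DEQ: take two queries $q\neq q'$ with distinct distributions of equal expected quality, e.g.\ $\mathcal{D}_q=\delta_{a}$ and $\mathcal{D}_{q'}=\delta_{a'}$ with $a\neq a'$ and $S=1$ on both. All other queries get $\delta_a$. With independent calls, GQ holds with $\mu_0=1$ and FED holds trivially, yet DEQ fails.
\item DEQ $\not\Rightarrow$ GQ and DEQ $\not\Rightarrow$ FED: I need an injective assignment of expected qualities. Enumerate $\Sigma^*=\{q_1,q_2,\dots\}$ and let $\mathcal{D}_{q_k}=p_k\delta_a+(1-p_k)\delta_b$ with distinct $p_k$, where $S(\cdot,a)=1$ and $S(\cdot,b)=0$. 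Distinct distributions then have distinct expected qualities $p_k$, so DEQ holds.
\begin{itemize}
\item To fail GQ, choose $p_k=1/(k+1)$. Then $\inf_k p_k=0$, so no $\mu_0>0$ works.
\item To fail FED, keep $p_1=\tfrac12$ and make calls on $q_1$ perfectly correlated. Covariance $\tfrac14$ then persists at every separation.
\end{itemize}
\end{itemize}

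\textbf{Main obstacle.} The delicate point is formal rather than computational. The oracle is defined only as the family $\{\mathcal{D}_q\}$, while FED refers to joint behaviour across calls. I will state explicitly that an oracle may also specify the joint law of repeated calls, with $\mathcal{D}_q$ as each call's marginal, since the perfectly correlated constructions rely on this. A second subtlety is the DEQ quantifier over \emph{all} pairs of queries. In the DEQ-satisfying constructions, distinct queries must either share the same $\mathcal{D}_q$ or have distinct expectations; the enumeration with distinct $p_k$ handles this. The rest is direct computation of expectations and covariances of Bernoulli scores.
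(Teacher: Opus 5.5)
Your proposal is correct and follows essentially the paper's route: six explicit Dirac-mixture counterexamples, with a repeat-the-first-draw control mechanism to break FED, independent (or degenerate) calls to satisfy it, and an enumeration of $\Sigma^*$ with injectively assigned success probabilities to secure DEQ. The differences are cosmetic and all remain valid: you reuse one oracle for both GQ$\not\Rightarrow$DEQ and FED$\not\Rightarrow$DEQ, use the common support $\{a,b\}$ with distinct weights instead of disjoint supports, and defeat GQ via $\inf_k p_k = 0$ rather than via a query of expected quality $0$.
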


\begin{proof}
There are three different pairs of conditions, and each pair has two 
cases. For each case we exhibit an oracle $\mathcal{O}$ and a quality 
measure $S$ such that $\mathcal{O}$ satisfies one condition under $S$ 
but not the other.

\medskip
\noindent\textbf{Pair 1: GQ vs.\ FED.}

\medskip
\noindent\textit{(GQ without FED)} Fix two distinct strings 
$r^c, r^w \in \Sigma^*$ and fix any $\mu_0 \in (0,1)$. Construct an 
oracle $\mathcal{O}$ by setting $\mathcal{D}_q = \mu_0 \delta_{r^c} + 
(1-\mu_0)\delta_{r^w}$ for all $q \in \Sigma^*$, with a control 
mechanism that makes a single draw $r_q \sim \mathcal{D}_q$ per query 
and returns the same draw for all subsequent calls on the same query 
$q$. Define a quality measure $S$ by, for all $q, r \in \Sigma^*$:
\[
S(q, r) = \begin{cases} 1 & \text{if } r = r^c, \\ 0 & \text{if } 
r = r^w, \\ \tfrac{1}{2} & \text{otherwise.} \end{cases}
\]
Then $\mathbb{E}_{r_q \sim \mathcal{D}_q}[S(q, r_q)] = \mu_0 > 0$ for 
all $q$, so $\mathcal{O}$ satisfies GQ. Since the control mechanism 
makes a single draw $r_q \sim \mathcal{D}_q$ per query and returns the 
same draw for all subsequent calls, $r_i = r_j = r_q$ for all calls 
$i, j$ on the same query $q$. Therefore $S(q, r_i)$ and $S(q, r_j)$ 
are both equal to the same random variable $X = S(q, r_q)$, where the 
randomness comes entirely from the initial draw $r_q \sim \mathcal{D}_q$. 
Hence:
\[
\mathrm{Cov}[S(q, r_i), S(q, r_j)] = \mathrm{Cov}[X, X] = 
\mathrm{Var}[X] = \mu_0(1 - \mu_0) > 0,
\]
for all $i \neq j$, at any separation $m = |j - i| - 1 \geq 0$. Since 
the covariance is constant and strictly positive for all $m$, it does 
not decay with $m$. Specifically, for any proposed constants $C > 0$ 
and $\lambda > 0$, the bound $C \cdot e^{-\lambda m} \to 0$ as 
$m \to \infty$, while $\mathrm{Cov}[S(q, r_i), S(q, r_j)] = 
\mu_0(1-\mu_0) > 0$ remains constant. Hence no such $C$ and $\lambda$ 
exist, violating FED.

\medskip
\noindent\textit{(FED without GQ)} Fix two distinct strings 
$r^c, r^w \in \Sigma^*$. Construct an oracle $\mathcal{O}$ with the 
same control mechanism as above---making a single draw $r_q \sim 
\mathcal{D}_q$ per query and returning the same draw for all subsequent 
calls on the same query---by fixing some $q_0 \in \Sigma^*$ and setting 
$\mathcal{D}_{q_0} = \delta_{r^w}$ and $\mathcal{D}_q = \delta_{r^c}$ 
for all $q \neq q_0$. Define $S$ as before: $S(q, r) = 1$ if $r = r^c$, 
$S(q, r) = 0$ if $r = r^w$, and $S(q, r) = 1/2$ otherwise. Since the 
control mechanism returns the same draw $r_q$ for all calls on the same 
query $q$, we have $r_i = r_j = r_q$ for all calls $i, j$ on the same 
query $q$. Therefore $S(q, r_i)$ and $S(q, r_j)$ are both equal to the 
same random variable $X = S(q, r_q)$, giving $\mathrm{Cov}[S(q, r_i), 
S(q, r_j)] = \mathrm{Var}[X] = 0$ for all $q \in \Sigma^*$ and all 
$i \neq j$. Hence $\mathcal{O}$ satisfies FED for any $C > 0$ and 
$\lambda > 0$. However, $\mathbb{E}_{r_{q_0} \sim 
\mathcal{D}_{q_0}}[S(q_0, r_{q_0})] = S(q_0, r^w) = 0$, so 
$\mathcal{O}$ fails GQ.

\medskip
\noindent\textbf{Pair 2: GQ vs.\ DEQ.}

\medskip
\noindent\textit{(GQ without DEQ)} Fix three distinct strings 
$r_1, r_2, r_3 \in \Sigma^*$ and construct an oracle $\mathcal{O}$ 
by setting:
\[
\begin{aligned}
\mathcal{D}_q &= \tfrac{2}{3}\delta_{r_1} + \tfrac{1}{3}\delta_{r_2} 
\quad \text{for all } q \text{ of even length,} \\
\mathcal{D}_q &= \tfrac{2}{3}\delta_{r_1} + \tfrac{1}{3}\delta_{r_3} 
\quad \text{for all } q \text{ of odd length.}
\end{aligned}
\]
Define a quality measure $S$ by, for all $q, r \in \Sigma^*$:
\[
S(q, r) = \begin{cases} 1 & \text{if } r = r_1, \\ 0 & \text{if } 
r = r_2 \text{ or } r = r_3, \\ \tfrac{1}{2} & \text{otherwise.} 
\end{cases}
\]
Then $\mathbb{E}_{r \sim \mathcal{D}_q}[S(q,r)] = \tfrac{2}{3}$ for 
every query $q$, so $\mathcal{O}$ satisfies GQ. However, for any 
even-length $q$ and odd-length $q'$, $\mathcal{D}_q \neq 
\mathcal{D}_{q'}$ but $\mathbb{E}_{r \sim \mathcal{D}_q}[S(q, r)] = 
\mathbb{E}_{r' \sim \mathcal{D}_{q'}}[S(q', r')] = \tfrac{2}{3}$, 
so $\mathcal{O}$ fails DEQ.

\medskip
\noindent\textit{(DEQ without GQ)} Enumerate all strings in $\Sigma^*$ 
as $q_0, q_1, \ldots$ and for each $q_i$ fix two distinct strings 
$r^c_i, r^w_i \in \Sigma^*$ such that all strings $r^c_0, r^w_0, 
r^c_1, r^w_1, \ldots$ are pairwise distinct; this is always possible 
since $\Sigma^*$ is infinite. Construct an oracle $\mathcal{O}$ by 
setting, for all $i \geq 0$:
\begin{align*}
\mathcal{D}_{q_i} &= \mu_i\delta_{r^c_i} + (1-\mu_i)\delta_{r^w_i}, \\
\mu_i &= i/(i+1).
\end{align*}
Note that $\mu_0 = 0$, so $\mathcal{D}_{q_0} = \delta_{r^w_0}$. Define 
a quality measure $S$ by $S(q_i, r^c_i) = 1$, $S(q_i, r^w_i) = 0$, 
and $S(q_i, r) = \tfrac{1}{2}$ for all other $r \in \Sigma^*$. The 
distributions $\mathcal{D}_{q_i}$ are all distinct since they have 
disjoint supports. Since each $\mathcal{D}_{q_i}$ is supported on 
$\{r^c_i, r^w_i\}$ only, $\mathbb{E}_{r \sim \mathcal{D}_{q_0}}
[S(q_0,r)] = 0$ and $\mathbb{E}_{r \sim \mathcal{D}_{q_i}}[S(q_i,r)] 
= i/(i+1)$ for $i \geq 1$, which are all distinct, so $\mathcal{O}$ 
satisfies DEQ. But $\mathbb{E}_{r \sim \mathcal{D}_{q_0}}[S(q_0,r)] 
= 0$, so $\mathcal{O}$ fails GQ.

\medskip
\noindent\textbf{Pair 3: FED vs.\ DEQ.}

\medskip
\noindent\textit{(FED without DEQ)} Enumerate all strings in $\Sigma^*$ 
as $q_0, q_1, \ldots$ and for each $q_i$ fix two distinct strings 
$r^c_i, r^w_i \in \Sigma^*$ such that all strings $r^c_0, r^w_0, 
r^c_1, r^w_1, \ldots$ are pairwise distinct. Fix $\mu_0 \in (0,1]$. 
Construct an oracle $\mathcal{O}$ by setting, for all $i \geq 0$:
\[
\mathcal{D}_{q_i} = \mu_0\delta_{r^c_i} + (1-\mu_0)\delta_{r^w_i},
\]
with a control mechanism that makes a fresh independent draw from 
$\mathcal{D}_{q_i}$ on each call on $q_i$, rather than reusing the 
same draw. Since the supports of $\mathcal{D}_{q_i}$ and 
$\mathcal{D}_{q_j}$ are disjoint for $i \neq j$ by pairwise 
distinctness of the strings, we have $\mathcal{D}_{q_i} \neq 
\mathcal{D}_{q_j}$ for all $i \neq j$. Define $S$ as in the DEQ 
without GQ construction: $S(q_i, r^c_i) = 1$, $S(q_i, r^w_i) = 0$, 
and $S(q_i, r) = \tfrac{1}{2}$ for all other $r \in \Sigma^*$. Since 
calls are independent, $\mathrm{Cov}_{r_a, r_b \sim 
\mathcal{D}_{q_i}}[S(q_i, r_a), S(q_i, r_b)] = 0$ for all $a \neq b$, 
satisfying FED for any $C > 0$ and $\lambda > 0$. However, 
$\mathbb{E}_{r \sim \mathcal{D}_{q_i}}[S(q_i, r)] = \mu_0$ for all 
$i \geq 0$, so all queries share the same expected quality despite 
having distinct response distributions, and $\mathcal{O}$ fails DEQ.

\medskip
\noindent\textit{(DEQ without FED)} Enumerate all strings in $\Sigma^*$ 
as $q_0, q_1, \ldots$ and for each $q_i$ fix two distinct strings 
$r^c_i, r^w_i \in \Sigma^*$ such that all strings $r^c_0, r^w_0, 
r^c_1, r^w_1, \ldots$ are pairwise distinct. Construct an oracle 
$\mathcal{O}$ by setting, for all $i \geq 0$:
\begin{align*}
\mathcal{D}_{q_i} &= \mu_i\delta_{r^c_i} + (1-\mu_i)\delta_{r^w_i}, \\
\mu_i &= 1/(i+2),
\end{align*}
with a control mechanism that makes one random draw for each $q_i$ and 
returns the same draw for every subsequent call on the same query $q_i$. 
The denominator $i+2$ rather than $i+1$ ensures $\mu_0 = \tfrac{1}{2} 
> 0$ at $i = 0$, so that GQ holds for all queries. The distributions 
$\mathcal{D}_{q_i}$ are all distinct since they have disjoint supports. 
Define $S$ as in the DEQ without GQ construction. Since each 
$\mathcal{D}_{q_i}$ is supported on $\{r^c_i, r^w_i\}$ only, 
$\mathbb{E}_{r \sim \mathcal{D}_{q_i}}[S(q_i, r)] = \mu_i = 1/(i+2)$ 
for all $i$, which are all distinct, so $\mathcal{O}$ satisfies DEQ. 
However, by the same argument as GQ without FED, $\mathrm{Cov}[S(q_i, 
r_a), S(q_i, r_b)] = \mathrm{Var}[S(q_i, r_{q_i})] = \mu_i(1-\mu_i) 
> 0$ for all $a \neq b$, where $r_{q_i} \sim \mathcal{D}_{q_i}$ is 
the initial draw on query $q_i$ and $r_a = r_b = r_{q_i}$. Since this 
covariance is constant and strictly positive for all $m$, it does not 
decay with $m$. Specifically, for any proposed constants $C > 0$ and 
$\lambda > 0$, the bound $C \cdot e^{-\lambda m} \to 0$ as $m \to 
\infty$, while $\mathrm{Cov}[S(q_i, r_a), S(q_i, r_b)] = 
\mu_i(1-\mu_i) > 0$ remains constant. Hence no such $C$ and $\lambda$ 
exist, violating FED.
\end{proof}

\begin{lemma}[BE implies GQ and FED] \label{lemma:3}
Every BE oracle satisfies GQ and FED with respect to the same quality 
measure $S$.
\end{lemma}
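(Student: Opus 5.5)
The plan is to verify the two conclusions separately, each directly from one of the two clauses in the definition of BE. The binary quality measure $S:\Sigma^*\times\Sigma^*\to\{0,1\}$ is in particular a map into $[0,1]$, so GQ and FED, which are stated for $[0,1]$-valued measures, make sense for the same $S$.

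For GQ, I will use clause 1 of BE. It gives $\mathbb{E}_{r\sim\mathcal{D}_q}[S(q,r)]\ge 1/2+\eta$ for every $q\in\Sigma^*$, with a single $\eta>0$ that does not depend on $q$. Set $\mu_0=\min\{1/2+\eta,\,1\}$. Then $\mu_0\in(0,1]$, and the bound holds uniformly in $q$, which is exactly GQ. The only care needed is the truncation at $1$: since expectations of $\{0,1\}$-valued variables never exceed $1$, the definition itself forces $\eta\le 1/2$ anyway.

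For FED, I will use clause 2 of BE. Repeated calls on the same query $q$ are independent draws from $\mathcal{D}_q$, so for calls $i\neq j$ the random variables $S(q,r_i)$ and $S(q,r_j)$ are functions of independent draws. They are also bounded, so their expectations exist, and hence they are independent with
\[
\mathrm{Cov}[S(q,r_i),S(q,r_j)]=\mathbb{E}[S(q,r_i)S(q,r_j)]-\mathbb{E}[S(q,r_i)]\,\mathbb{E}[S(q,r_j)]=0 .
\]
Consequently $|\mathrm{Cov}|=0\le C e^{-\lambda m}$ for any choice of $C>0$ and $\lambda>0$ (for instance $C=\lambda=1$) and every separation $m\ge 0$. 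This is the same reasoning already used in the FED-without-DEQ construction of the preceding lemma.

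I do not expect a real obstacle. The one point to state explicitly is that the independence in clause 2 applies to every pair of calls $i\neq j$, including adjacent calls with $m=0$. It therefore yields pairwise zero covariance at every separation, not merely asymptotic decay, so FED holds trivially and with constants independent of $q$.
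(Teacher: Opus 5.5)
Your proposal is correct and follows essentially the same approach as the paper: GQ comes from the uniform success bound with $\mu_0 = 1/2+\eta$, and FED comes from independence forcing zero covariance, so the bound holds for any $C,\lambda>0$. Your truncation $\mu_0=\min\{1/2+\eta,1\}$, together with the observation that $\eta\le 1/2$ is forced anyway, is harmless extra care that the paper omits.
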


\begin{proof}
\textit{BE implying GQ:} Under a BE oracle, $S(q, r) \in \{0,1\}$ 
for all $q \in \Sigma^*$ and all $r \in \Sigma^*$, and 
$\mathbb{E}_{r \sim \mathcal{D}_q}[S(q,r)] \geq 1/2 + \eta > 0$ for 
all $q \in \Sigma^*$ and a fixed $\eta > 0$. Hence the oracle satisfies 
GQ with $\mu_0 = 1/2 + \eta$.

\medskip
\noindent\textit{BE implying FED:} Under a BE oracle, calls are 
independent, so:
\[
\mathrm{Cov}_{r_i, r_j \sim \mathcal{D}_q}[S(q, r_i), S(q, r_j)] = 0
\]
for all $i \neq j$ and all $q \in \Sigma^*$. Since the covariance is 
identically $0$, the FED constraint is satisfied for any $\lambda > 0$ 
and $C > 0$. Hence the oracle satisfies FED. 
\end{proof}

\begin{lemma}[BE does not imply DEQ]
There exists a BE oracle that fails DEQ under the same quality measure 
$S$.
\end{lemma}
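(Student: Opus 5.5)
The plan is to give an explicit construction in the style of the ``FED without DEQ'' example from the proof of the pairwise non-implication lemma, adapted to a binary quality measure and to the BE success threshold. Enumerate $\Sigma^*$ as $q_0, q_1, \ldots$ and, as there, choose pairwise distinct strings $r^c_0, r^w_0, r^c_1, r^w_1, \ldots$ in $\Sigma^*$. Fix $\eta \in (0, 1/2)$, set $\mu = 1/2 + \eta$, and define
\[
\mathcal{D}_{q_i} = \mu\,\delta_{r^c_i} + (1-\mu)\,\delta_{r^w_i} \quad \text{for all } i \geq 0.
\]
The control mechanism makes a fresh independent draw from $\mathcal{D}_{q_i}$ on every call on $q_i$. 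Define the binary quality measure by $S(q_i, r^c_i) = 1$ and $S(q_i, r) = 0$ for every other $r \in \Sigma^*$. This is the one change from the earlier construction, where ``otherwise'' received the value $\tfrac{1}{2}$; here every value must lie in $\{0,1\}$.

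Next, verify BE. For each $q_i$, $\mathbb{E}_{r \sim \mathcal{D}_{q_i}}[S(q_i, r)] = \mu = 1/2 + \eta$, which gives the bounded-error success condition with a single fixed $\eta$ for all queries. Independence across calls holds by the choice of control mechanism. Measurability is automatic, since each $\mathcal{D}_{q_i}$ is a finite mixture of Dirac measures.

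Then show that DEQ fails. For $i \neq j$ the supports $\{r^c_i, r^w_i\}$ and $\{r^c_j, r^w_j\}$ are disjoint, so $\mathcal{D}_{q_i} \neq \mathcal{D}_{q_j}$. Yet both expected qualities equal $1/2 + \eta$, which violates the injectivity that DEQ requires. A single pair, such as $q_0$ and $q_1$, already serves as a witness.

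I do not expect a real obstacle. The only point needing care is that $S$ must be genuinely $\{0,1\}$-valued on all of $\Sigma^* \times \Sigma^*$, including pairs $(q_i, r)$ with $r$ outside the support of $\mathcal{D}_{q_i}$; the ``otherwise $0$'' clause handles this. As a simpler alternative, one could use two query classes, such as even-length versus odd-length queries as in the ``GQ without DEQ'' example, with distinct wrong answers $r_2$ and $r_3$ and success probability $2/3$ (so $\eta = 1/6$). Either construction works, and I would present the two-class version for brevity.
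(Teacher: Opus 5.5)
Your proposal is correct and matches the paper's proof. The paper uses your two-class version with a general $\eta$: $\mathcal{D}_q = (1/2+\eta)\delta_{r^c} + (1/2-\eta)\delta_{r^w}$ for even $|q|$, the same with $r^{w'}$ in place of $r^w$ for odd $|q|$, $S(q,r)=1$ iff $r=r^c$ and $0$ otherwise, and independent draws. If you adapt the ``GQ without DEQ'' example, replace its ``otherwise $\tfrac{1}{2}$'' clause by $0$ and state explicitly that draws are fresh and independent on each call, since that example specifies neither.
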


\begin{proof}
Fix three distinct strings $r^c, r^w, r^{w'} \in \Sigma^*$ and fix 
$\eta \in (0, 1/2)$. Define $S(q, r) = 1$ if $r = r^c$ and $S(q, r) 
= 0$ otherwise. Construct an oracle $\mathcal{O}$ by setting:
\[
\mathcal{D}_q = \begin{cases} 
(1/2 + \eta)\delta_{r^c} + (1/2 - \eta)\delta_{r^w} & \text{if } 
|q| \text{ is even,} \\
(1/2 + \eta)\delta_{r^c} + (1/2 - \eta)\delta_{r^{w'}} & \text{if } 
|q| \text{ is odd,}
\end{cases}
\]
with responses drawn independently on each query. Then 
$\mathbb{E}_{r \sim \mathcal{D}_q}[S(q, r)] = 1/2 + \eta > 1/2$ for 
all $q \in \Sigma^*$, and responses are independent, so $\mathcal{O}$ 
satisfies BE. For any even-length $q$ and odd-length $q'$, 
$\mathcal{D}_q \neq \mathcal{D}_{q'}$ since $r^w \neq r^{w'}$, yet 
$\mathbb{E}_{r \sim \mathcal{D}_q}[S(q, r)] = \mathbb{E}_{r' \sim 
\mathcal{D}_{q'}}[S(q', r')] = 1/2 + \eta$. Hence $\mathcal{O}$ 
fails DEQ. 
\end{proof}

\begin{lemma}[BE does not imply $\neg$DEQ]
There exists a BE oracle that satisfies DEQ under the same quality 
measure $S$.
\end{lemma}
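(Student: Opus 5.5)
We prove the statement by an explicit construction in the style of the DEQ-without-GQ and DEQ-without-FED examples. We need an oracle whose expected success is at least $1/2+\eta$ on every query, whose repeated calls are independent, and whose distinct response distributions always have distinct expected scores. The simplest way to get DEQ is to make the expected quality an injective function of the query index. Since the expected quality is a function of $\mathcal{D}_q$, it is enough to choose the $\mathcal{D}_{q_i}$ pairwise distinct with pairwise distinct expected scores.

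Enumerate $\Sigma^*$ as $q_0, q_1, \ldots$ and fix two distinct strings $r^c, r^w \in \Sigma^*$. Fix $\eta \in (0,1/4)$ and set
\[
\mu_i = \tfrac{1}{2} + \eta + \tfrac{\eta}{i+1},
\]
which is strictly decreasing in $i$ and lies in $(1/2+\eta,\, 1/2+2\eta] \subset (1/2, 1)$. Define
\[
\mathcal{D}_{q_i} = \mu_i \delta_{r^c} + (1-\mu_i)\delta_{r^w},
\]
with a control mechanism that makes a fresh independent draw on every call. Take the binary quality measure $S(q,r)=1$ if $r=r^c$ and $S(q,r)=0$ otherwise. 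This is a mixture of Dirac measures, so the oracle is Measurable.

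We then verify the conditions in order.
\begin{enumerate}[noitemsep, topsep=5pt]
\item \textbf{Bounded-error success.} $\mathbb{E}_{r\sim\mathcal{D}_{q_i}}[S(q_i,r)] = \mu_i > 1/2+\eta$ for every $i$.
\item \textbf{Independence.} Repeated calls are independent by construction of the control mechanism, so $\mathcal{O}$ is BE.
\item \textbf{DEQ.} Suppose $\mathcal{D}_{q_i} \neq \mathcal{D}_{q_j}$. Then $i \neq j$, so $\mu_i \neq \mu_j$ because $\mu$ is strictly decreasing. The two expected qualities are therefore distinct, which is exactly DEQ.
\end{enumerate}
Conversely, equal distributions trivially give equal expectations, so there is no tension with the definition.

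The only delicate step is keeping all $\mu_i$ inside $(1/2+\eta, 1]$ while making them pairwise distinct. Choosing a sequence that converges to $1/2+\eta$ from above handles this, and $\eta<1/4$ keeps every $\mu_i \le 1$. By Lemma~\ref{lemma:3}, this oracle also satisfies GQ and FED. It therefore witnesses that GBE is non-vacuous, since an oracle satisfying BE and DEQ is GBE by definition.
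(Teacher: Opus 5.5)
Your construction is correct and takes essentially the same approach as the paper: a mixture of Dirac measures, the indicator score on $r^c$, fresh independent draws, and success probabilities uniformly at least $1/2+\eta$ that differ across distinct distributions. The only difference is that the paper uses two query classes (even/odd length, with $\eta_1 \neq \eta_2$), whereas you give every enumerated query its own $\mu_i$; your anchoring of the sequence at $1/2+\eta$ correctly keeps the BE bound uniform.
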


\begin{proof}
Fix three distinct strings $r^c, r^w_1, r^w_2 \in \Sigma^*$ and 
fix $\eta_1, \eta_2 \in (0, 1/2)$ with $\eta_1 \neq \eta_2$. 
Define $S(q, r) = 1$ if $r = r^c$ and $S(q, r) = 0$ otherwise. 
Construct an oracle $\mathcal{O}$ by setting:
\[
\mathcal{D}_q = \begin{cases}
(1/2 + \eta_1)\delta_{r^c} + (1/2 - \eta_1)\delta_{r^w_1} 
& \text{if } |q| \text{ is even,} \\
(1/2 + \eta_2)\delta_{r^c} + (1/2 - \eta_2)\delta_{r^w_2} 
& \text{if } |q| \text{ is odd,}
\end{cases}
\]  
with responses drawn independently on each query. Then 
$\mathbb{E}_{r \sim \mathcal{D}_q}[S(q, r)] \geq 1/2 + 
\min(\eta_1, \eta_2) > 1/2$ for all $q \in \Sigma^*$, and 
responses are independent, so $\mathcal{O}$ satisfies BE. For 
any even-length $q$ and odd-length $q'$, $\mathcal{D}_q \neq 
\mathcal{D}_{q'}$ and $\mathbb{E}_{r \sim \mathcal{D}_q}[S(q,r)] 
= 1/2 + \eta_1 \neq 1/2 + \eta_2 = \mathbb{E}_{r' \sim 
\mathcal{D}_{q'}}[S(q', r')]$ since $\eta_1 \neq \eta_2$. Hence 
$\mathcal{O}$ satisfies DEQ, so $\mathcal{O}$ does not satisfy 
$\neg$DEQ.
\end{proof}

\begin{remark}
BE implies neither DEQ nor its negation. An oracle satisfying BE 
may have all queries achieving the same success probability 
$1/2 + \eta$ (violating DEQ), or different queries achieving 
different success probabilities (satisfying DEQ). This gives rise 
to two distinct aspirational targets: GBE (BE + DEQ), which enables 
both probability amplification for success and adaptive quality 
optimization; and BE + $\neg$DEQ, in which every query achieves 
the same success probability regardless of formulation, enabling 
deployment robustness. These two conditions are mutually exclusive: 
an oracle cannot simultaneously satisfy DEQ and $\neg$DEQ. The 
taxonomy adopts GBE rather than BE + $\neg$DEQ as the aspirational 
target because the paper's focus is on optimizing token use, which 
requires estimating quality from observed responses; DEQ is the 
condition that makes this estimation possible.
\end{remark}

\begin{lemma}[GBE implies all conditions]
Every GBE oracle satisfies Measurable, GQ, FED, DEQ, and BE with 
respect to the same quality measure $S$.
\end{lemma}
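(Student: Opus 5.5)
The claim follows by unpacking the definition of GBE and chaining it with the lemmas already proved. No new construction is needed; the proof is a sequence of implications.

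Let $\mathcal{O}$ be a GBE oracle with respect to a binary quality measure $S: \Sigma^* \times \Sigma^* \rightarrow \{0,1\}$.

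First, by the definition of Global BE, $\mathcal{O}$ satisfies both BE and DEQ with respect to $S$. This immediately gives two of the five conditions.

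Second, I invoke Lemma~\ref{lemma:3} (BE implies GQ and FED). Since $\mathcal{O}$ is BE under $S$, it satisfies GQ with $\mu_0 = 1/2 + \eta$. It also satisfies FED for any $C>0$ and $\lambda>0$, because the independence clause of BE makes every covariance vanish. One small point needs a remark here. GQ and FED are defined for quality measures $S$ with values in $[0,1]$. A binary $S$ takes values in $\{0,1\} \subseteq [0,1]$, so it is an admissible quality measure for those definitions. Lemma~\ref{lemma:3} therefore applies under the same $S$.

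Third, Measurable follows from the lemma stating that Measurable is necessary for all conditions. Applying it to any of the conditions already established, for instance BE, gives Measurable. Alternatively, one can note directly that $\Sigma^*$ is countable, so each $\mathcal{D}_q$ is a well-defined probability measure.

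There is no genuine obstacle. The only step requiring care is the type-compatibility check in the second step: the binary measure underlying GBE must qualify as a $[0,1]$-valued quality measure for GQ and FED, so that all five conditions hold with respect to the same $S$, as the statement requires. Combining the three steps gives all five conditions: Measurable, GQ, FED, DEQ and BE.
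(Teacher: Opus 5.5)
Your proposal is correct and follows the paper's proof: you unpack GBE as BE plus DEQ, then apply the BE-implies-GQ-and-FED lemma. Two details make it slightly more precise than the paper. You take Measurable from the ``Measurable is necessary for all conditions'' lemma, whereas the paper loosely attributes it to the BE-implies-GQ-and-FED lemma. You also check that a $\{0,1\}$-valued $S$ is an admissible $[0,1]$-valued quality measure.
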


\begin{proof}
By definition, GBE requires both BE and DEQ with respect to $S$. BE 
implies Measurable, GQ, and FED by Lemma \ref{lemma:3}. DEQ is required directly 
by the GBE definition. Hence a GBE oracle satisfies all six conditions. 

\end{proof}

\subsection{LLMs in the Taxonomy}\label{LLMs in the Taxonomy}

We assess current LLMs against each condition in turn.

\begin{itemize}
  \item \textbf{Measurable:} All LLMs satisfy Measurable: their 
  responses are drawn from well-defined distributions, making expected 
  quality and token cost well-defined. This is a necessary baseline, 
  not a useful guarantee.

  \item \textbf{Guaranteed Quality:} Current evidence suggests that 
  LLMs satisfy GQ for well-defined tasks: LLMs achieve expected quality 
  $\mu_0 > 0$ when evaluation criteria are clear. However, the true 
  per-query quality varies widely, making the uniform bound $\mu_0$ 
  loose and unknown without empirical measurement.

  \item \textbf{Fast Error Decay:} LLMs satisfy FED when called 
  statelessly: error correlations across repeated calls decay over 
  time, consistent with a finite but unknown $t_\text{corr}$. In the 
  chatbot setting, conversational history is maintained within the 
  context window, improving response relevance but allowing errors to 
  persist and compound across turns with no guaranteed decay.

  \item \textbf{Distinct Expected Quality:} Current evidence suggests 
  that LLMs satisfy DEQ: expected quality levels vary empirically 
  across queries, enabling task-specific calibration. However, 
  provider-side updates to the model or deployment can shift quality 
  levels discontinuously, making prior calibration unreliable after 
  an update.

  \item \textbf{Bounded Error (BE):} LLMs fail BE. This failure is 
  not an engineering limitation but a consequence of what LLMs 
  fundamentally are: stochastic, open-ended systems without a fixed 
  correctness criterion.

  \item \textbf{Global Bounded Error (GBE):} Since LLMs fail BE, they 
  also fail GBE. GBE is the aspirational target for LLM technology.
\end{itemize}

This classification tells us exactly which structural properties of 
the oracle framework apply to LLMs and which do not. All theorems on 
token complexity in this paper require at most GQ, and therefore apply 
to LLM oracles. GQ is the weakest condition needed: it ensures that expected quality 
under the given quality measure $S$ is bounded away from zero on every 
query, which is sufficient to define a non-trivial quality constraint 
and guarantee the feasibility of token complexity at some positive 
threshold.

\section{Token Complexity}\label{Token Complexity}

We define token complexity in four steps: the tokenizer and token 
count, the weighted token cost function, token complexity itself, and 
the complexity frontier characterizing tradeoffs over tokens, time, 
and space. We close with a discussion situating token complexity 
relative to Kolmogorov complexity and rate-distortion theory.

\textbf{Tokenizer.} Token complexity is measured via a tokenizer 
$\tau: \Sigma^* \rightarrow \mathcal{T}^*$, a function mapping strings 
to token sequences, with an associated deterministic decoding function 
$\rho: \mathcal{T}^* \rightarrow \Sigma^*$ satisfying $\rho(\tau(s)) 
= s$ for all $s \in \Sigma^*$. The token vocabulary $\mathcal{T}$ is 
the set of individual tokens appearing in the range of $\tau$. For 
notational convenience, we use $\tau^{-1}$ to denote $\rho$ throughout 
the paper. This definition reflects real tokenization practice, where 
schemes such as Byte Pair Encoding (BPE)~\cite{Sennrich2016}, 
WordPiece~\cite{Schuster2012}, or SentencePiece~\cite{Kudo2018} map 
raw text to subword token sequences.

\subsection{Token Count}\label{Token Count}

Let $\mathcal{M} = (M, \mathcal{O})$ be an AOTM for solving task 
$T = (X, Y, S, \mathcal{D}_X)$.

\begin{definition}[Token Count]
The \textbf{query token count}, \textbf{response token count}, and 
\textbf{total token count} of $\mathcal{M}$ on task instance $x$ are 
defined as:
\begin{align*}
\mathrm{tok}_{\mathcal{M},Q}(x) &= \sum_i |\tau(q_i)|, \\
\mathrm{tok}_{\mathcal{M},R}(x) &= \sum_i |\tau(r_i)|, \\
\mathrm{tok}_{\mathcal{M}}(x) &= \mathrm{tok}_{\mathcal{M},Q}(x) + 
\mathrm{tok}_{\mathcal{M},R}(x),
\end{align*}
where $|\tau(q_i)|$ and $|\tau(r_i)|$ denote the number of tokens in 
the query and response of the $i$-th oracle call. No other quantities 
are included; in particular, overhead terms, session structure, and 
streaming behavior are accounted for by $\mathrm{TIME}_\mathcal{M}$ 
and $\mathrm{SPACE}_\mathcal{M}$.
\end{definition}

\begin{definition}[Expected Token Count]
The \textbf{expected query token count}, \textbf{expected response 
token count}, and \textbf{expected total token count} of $\mathcal{M}$ 
are defined as:
\begin{align*}
\mathrm{tok}_{\mathcal{M},Q} &= \mathbb{E}_{x \sim \mathcal{D}_X}
[\mathrm{tok}_{\mathcal{M},Q}(x)], \\
\mathrm{tok}_{\mathcal{M},R} &= \mathbb{E}_{x \sim \mathcal{D}_X}
[\mathrm{tok}_{\mathcal{M},R}(x)], \\
\mathrm{tok}_{\mathcal{M}} &= \mathrm{tok}_{\mathcal{M},Q} + 
\mathrm{tok}_{\mathcal{M},R}.
\end{align*}
\end{definition}

\subsection{Token Cost}\label{Token Cost}

Let $\alpha, \beta > 0$ denote the per-token costs for querying and 
receiving responses from $\mathcal{O}$. The token cost of $\mathcal{M}$ on task instance $x$ is:
\[
\mathrm{TOK}_{\mathcal{M}}(x; \alpha, \beta) = \alpha \cdot 
\mathrm{tok}_{\mathcal{M},Q}(x) + \beta \cdot 
\mathrm{tok}_{\mathcal{M},R}(x),
\]
and the task-level token cost is:
\[
\mathrm{TOK}_{\mathcal{M}}(\alpha, \beta) = \alpha \cdot 
\mathrm{tok}_{\mathcal{M},Q} + \beta \cdot 
\mathrm{tok}_{\mathcal{M},R}.
\]
We write $\mathrm{TOK}_{\mathcal{M}}$ for $\mathrm{TOK}_{\mathcal{M}}
(\alpha, \beta)$ when $\alpha$ and $\beta$ are clear from context.

Throughout this paper, $\alpha$ and $\beta$ represent per-token 
cost parameters that can be interpreted either as assigned prices 
(as in commercial LLM APIs, where query and response tokens are 
priced differently) or as intrinsic costs (as in open-source LLMs, 
where the same asymmetry arises through compute). The terms 
``price'' and ``cost'' are used interchangeably to refer to 
$\alpha$ and $\beta$.

\subsection{Token Complexity}\label{Token Complexity-2}

\begin{definition}[Token Complexity]
The \textbf{token complexity} of task $T$ at quality threshold 
$\theta \in (0, 1]$ is:
\[
\kappa_T(\theta; \alpha, \beta) = \min_{\mathcal{M} \text{ for } T} 
\mathrm{TOK}_{\mathcal{M}}, \quad \text{subject to } \mathbb{E}_{x 
\sim \mathcal{D}_X}[S(x, \mathcal{M}(x))] \geq \theta,
\]
where the minimum is over all AOTMs $\mathcal{M} = (M, \mathcal{O})$ 
for task $T$.
\end{definition}

\begin{remark}
The minimum ranges over all querying strategies for solving the 
underlying task, including non-adaptive and adaptive strategies. Token 
complexity thus captures the minimum cost of the best possible strategy 
for solving the task.
\end{remark}

We write $\kappa_T(\theta; \alpha, \beta)$ to make the dependence on 
$T$ explicit. When the task is clear from context, we omit the 
subscript and write $\kappa(\theta; \alpha, \beta)$, or simply 
$\kappa(\theta)$ when $\alpha$ and $\beta$ are also clear from context.

The randomness of $\mathcal{M}(x)$ comes from two sources: $M$'s 
independent private random source and $\mathcal{O}$'s stochastic 
response distributions $\{\mathcal{D}_q\}$. The expectation 
$\mathbb{E}_{x \sim \mathcal{D}_X}[S(x, \mathcal{M}(x))]$ is taken 
jointly over both sources and over $x \sim \mathcal{D}_X$.

When no AOTM satisfies the quality constraint at threshold $\theta$, 
the feasible set is empty. By convention, the minimum over an empty 
set is $+\infty$, so $\kappa_T(\theta; \alpha, \beta) = \infty$ in 
this case.

\begin{remark}
GQ oracles guarantee that token complexity is well-defined for $\theta 
\in (0, \mu_0]$: the feasible set is non-empty since $\mathbb{E}_{x 
\sim \mathcal{D}_X}[S(x, \mathcal{M}(x))] \geq \mu_0 > 0$ for some 
AOTM $\mathcal{M}$, giving $\kappa_T(\theta; \alpha, \beta) < \infty$. 
Without GQ, the feasible set may be empty at every positive threshold, 
making token complexity trivially infinite.
\end{remark}

\begin{remark}
For a positive integer $n$, the \textbf{token complexity class} 
$\mathrm{TC}(n; \theta, \alpha, \beta)$ is the set of all tasks $T$ 
with $\kappa_T(\theta; \alpha, \beta) \leq n$:
\[
\mathrm{TC}(n; \theta, \alpha, \beta) = \{T : \kappa_T(\theta; \alpha, 
\beta) \leq n\}.
\]
This is analogous to classical complexity classes defined by resource 
bounds, such as This is analogous to classical complexity classes defined by resource 
bounds, such as P or PSPACE. A task is in 
$\mathrm{TC}(n; \theta, \alpha, \beta)$ if it can be solved within $n$ 
tokens at quality $\theta$. The exact-value class $\{T : 
\kappa_T(\theta; \alpha, \beta) = n\}$ is the difference 
$\mathrm{TC}(n; \theta, \alpha, \beta) \setminus \mathrm{TC}(n-1; 
\theta, \alpha, \beta)$. Unlike classical complexity classes, which 
classify problems by asymptotic complexity growth rate, token 
complexity classes are exact; they distinguish tasks that differ by 
even a single token. The structure of these classes---how they nest, 
how they change with $(\theta, \alpha, \beta)$, and which tasks fall 
in the same class---is beyond the scope of this paper but is a natural 
direction for future work.
\end{remark}

\subsection{Complexity Frontier}\label{Complexity Frontier}

\begin{definition}[Complexity Frontier]
The \textbf{complexity frontier} of task $T$ at quality threshold 
$\theta \in (0, 1]$ over a subset of instances $X_n \subseteq X$ is:
\[
F_T(\theta; X_n) = \left\{(k, t, s) \in \mathbb{R}^3 : \exists\, 
\mathcal{M} = (M, \mathcal{O}) \text{ s.t. }
\begin{aligned}
&\mathbb{E}_{x \sim \mathcal{D}_X|_{X_n}}[\mathrm{tok}_{\mathcal{M}}(x)] 
\leq k, \\
&\mathbb{E}_{x \sim \mathcal{D}_X|_{X_n}}[\mathrm{TIME}_{\mathcal{M}}(x)] 
\leq t, \\
&\max_{x \in X_n} \mathrm{SPACE}_{\mathcal{M}}(x) \leq s, \\
&\mathbb{E}_{x \sim \mathcal{D}_X|_{X_n}}[S(x, \mathcal{M}(x))] \geq 
\theta
\end{aligned}
\right\},
\]
where $\mathcal{D}_X|_{X_n}$ denotes $\mathcal{D}_X$ conditioned on 
$X_n$. Time and token resources are measured in expectation; space is 
measured worst-case since it must be available simultaneously for all 
computations. The full frontier $F_T(\theta) = F_T(\theta; X)$ is the 
case $X_n = X$.
\end{definition}

The complexity frontier is a subset of $\mathbb{R}^3$ characterizing 
the achievable combinations of token count, time, and space for a task, 
whose lower boundary captures all efficient operating points---resource 
bounds where no bound can be reduced while maintaining feasibility. 
Unlike the coarse-grained hierarchy of classical complexity classes, 
the frontier captures the continuous structure of resource tradeoffs, 
making explicit how token count, time, and space can be substituted 
for one another. As $\theta$ varies continuously over $[0,1]$, the 
frontier $F_T(\theta; X_n)$ traces a continuous family of feasible 
sets, in contrast to the discrete membership boundaries of classical 
complexity classes.

\begin{remark}
Restricting to finite subsets $X_n \subseteq X$ reflects reality: any 
deployed application encounters only a finite collection of task 
instances within any given time interval. The full input space $X$ is 
a theoretical abstraction that bounds what the application may 
eventually face, but actual token complexity is always computed over a 
finite sample. The frontier $F_T(\theta; X_n)$ is therefore the 
practically relevant object, and its properties---non-emptiness, 
upward-closure, and convexity---hold for any finite $X_n$, with proofs 
that require no additional assumptions beyond finiteness. The time 
interval determines the finiteness of $X_n$: it is an upper time bound 
on the computation of all but the last task instance, whose computation 
may not have completed before the interval ends.
\end{remark}

\begin{remark}
Using token cost $\mathrm{TOK}_{\mathcal{M}}(x) = \alpha \cdot 
\mathrm{tok}_{\mathcal{M},Q}(x) + \beta \cdot 
\mathrm{tok}_{\mathcal{M},R}(x)$ as the first coordinate gives a 
price-dependent frontier $F_T(\theta; X_n; \alpha, \beta)$. This has 
the advantage of directly encoding the monetary cost at given prices, 
making the connection to $\kappa_T(\theta; \alpha, \beta)$ immediate. 
The disadvantage is that the frontier changes with prices, obscuring 
the underlying computational structure: two systems with identical 
computational behavior but different price environments would have 
different frontiers. The price-independent formulation adopted here 
separates the geometric structure of the frontier---what is 
computationally achievable---from the economic question of which 
achievable point is cheapest at given prices.
\end{remark}

\subsection{Relationship to Kolmogorov Complexity and Rate-Distortion Theory}\label{Relationship to Kolmogorov Complexity and Rate-Distortion Theory}

Token complexity is structurally related to two classical theories; 
making these connections explicit clarifies what is new.

\textbf{Kolmogorov complexity}~\cite{Kolmogorov1965, Li2008} defines 
the complexity of a string $x$, denoted $K(x)$, as the length of the 
shortest program that produces it: a minimum over all possible 
descriptions. Token complexity is analogously defined as a minimum 
over all possible machines. Both measure the irreducible cost of a 
computational task as a minimum over all possible descriptions or 
strategies. Three differences are worth stating precisely:

\begin{itemize}
  \item \textbf{Object of study:} Kolmogorov complexity is a property 
  of a single string. Token complexity is a property of a task defined 
  by a distribution $\mathcal{D}_X$: complexity is a weighted average 
  over task instances, with weights given by $\mathcal{D}_X$.

  \item \textbf{Stochasticity:} Kolmogorov complexity is deterministic: 
  the shortest program either produces $x$ or it does not. Token 
  complexity is stochastic: quality is an expectation over oracle 
  responses and the input distribution, with the threshold $\theta$ 
  parameterizing the required quality level.

  \item \textbf{Computability:} Kolmogorov complexity is uncomputable: 
  no algorithm can compute $K(x)$ for all $x$. Whether token complexity 
  $\kappa(\theta; \alpha, \beta)$ is computable remains an open problem 
  (Open Problem O1).
\end{itemize}

The closer formal analog is \textbf{rate-distortion 
theory}~\cite{Shannon1959, Cover2006}, which asks for the minimum 
communication cost to reconstruct a source within a given distortion 
level. Token complexity shares this structure: minimum cost to achieve 
a given quality of output, as a function of how much error is 
tolerable. The structural parallels suggest that techniques from both 
Kolmogorov complexity and rate-distortion theory may prove useful in 
studying token complexity.

\section{Properties of Token Complexity}
\label{Properties of Token Complexity}

In practice, achieving higher quality tends to require more targeted 
token use: more precise queries, selective retries, or output 
verification. The following two theorems formalize this relationship. 
Both concern $\kappa$ as a function of $\theta$ alone, holding $T$, 
$\alpha$, and $\beta$ fixed; we write $\kappa(\theta)$ for 
$\kappa_T(\theta; \alpha, \beta)$ to simplify notation.

\begin{theorem}[Monotonicity]
$\kappa(\theta)$ is monotone non-decreasing in $\theta$: if $\theta_2 
\geq \theta_1$, then $\kappa(\theta_2) \geq \kappa(\theta_1)$.
\end{theorem}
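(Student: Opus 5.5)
The proof will be a feasible-set inclusion argument. For each $\theta \in (0,1]$, let $\mathcal{F}(\theta)$ denote the set of AOTMs $\mathcal{M} = (M, \mathcal{O})$ for $T$ satisfying the quality constraint
\[
\mathbb{E}_{x \sim \mathcal{D}_X}[S(x, \mathcal{M}(x))] \geq \theta .
\]
By the definition of token complexity, $\kappa(\theta) = \inf_{\mathcal{M} \in \mathcal{F}(\theta)} \mathrm{TOK}_{\mathcal{M}}$. I will read the ``$\min$'' as an infimum, since attainment is not guaranteed in general; the argument is identical either way. The convention $\inf \emptyset = +\infty$ is the one stated after the definition.

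\textbf{Step 1 (nesting).} Take $\theta_2 \geq \theta_1$. For any $\mathcal{M} \in \mathcal{F}(\theta_2)$ we have
\[
\mathbb{E}[S(x, \mathcal{M}(x))] \geq \theta_2 \geq \theta_1 ,
\]
so $\mathcal{M} \in \mathcal{F}(\theta_1)$. Hence $\mathcal{F}(\theta_2) \subseteq \mathcal{F}(\theta_1)$.

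The objective $\mathrm{TOK}_{\mathcal{M}}(\alpha,\beta) = \alpha\,\mathrm{tok}_{\mathcal{M},Q} + \beta\,\mathrm{tok}_{\mathcal{M},R}$ depends only on $\mathcal{M}$, $\alpha$, $\beta$ and $\mathcal{D}_X$, not on $\theta$. So we are minimizing the same function over nested sets.

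\textbf{Step 2 (infimum over a subset).} An infimum taken over a subset is at least the infimum over the larger set. Therefore
\[
\kappa(\theta_2) = \inf_{\mathcal{F}(\theta_2)} \mathrm{TOK}_{\mathcal{M}} \;\geq\; \inf_{\mathcal{F}(\theta_1)} \mathrm{TOK}_{\mathcal{M}} = \kappa(\theta_1).
\]

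\textbf{Step 3 (empty feasible sets).} These cases need to be stated explicitly.
\begin{itemize}
\item If $\mathcal{F}(\theta_2) = \emptyset$, then $\kappa(\theta_2) = +\infty$ and the inequality holds trivially in the extended reals.
\item If $\mathcal{F}(\theta_2) \neq \emptyset$, then $\mathcal{F}(\theta_1) \supseteq \mathcal{F}(\theta_2)$ is also nonempty, and Step 2 applies to two genuine infima.
\end{itemize}
The case $\mathcal{F}(\theta_1) = \emptyset$ while $\mathcal{F}(\theta_2) \neq \emptyset$ cannot occur, by the nesting in Step 1.

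\textbf{Main subtlety.} There is no real obstacle here; the only care needed is definitional. The quality constraint and the cost must be evaluated with the same fixed $\alpha$, $\beta$ and $\mathcal{D}_X$ at both thresholds. The same oracle $\mathcal{O}$, or class of admissible oracles, must also range over both thresholds, so that ``AOTM for $T$'' denotes the same class at both. Once that is fixed, no oracle condition such as GQ is needed for monotonicity itself.
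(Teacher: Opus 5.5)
Your proposal is correct and follows the paper's proof: both rest on the observation that any AOTM meeting the quality constraint at $\theta_2$ also meets it at $\theta_1$, so the feasible set at $\theta_1$ contains the one at $\theta_2$ and its minimum is no larger. Your explicit handling of the infimum and of empty feasible sets (via the $+\infty$ convention) fills in details the paper leaves implicit.
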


\begin{proof}
If $\theta_2 \geq \theta_1$, any AOTM satisfying the quality constraint 
at $\theta_2$ also satisfies it at $\theta_1$. The feasible set at 
$\theta_1$ is therefore no smaller, so the minimum is no greater. 

\end{proof}

\begin{theorem}[Convexity]
$\kappa(\theta)$ is convex in $\theta$: for all $\theta_1, \theta_2 
\in (0,1]$ and all $\gamma \in [0,1]$:
\[
\kappa(\gamma \cdot \theta_1 + (1-\gamma) \cdot \theta_2) \leq 
\gamma \cdot \kappa(\theta_1) + (1-\gamma) \cdot \kappa(\theta_2).
\]
\end{theorem}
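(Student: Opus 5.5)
The plan is a randomized-mixing argument: the machine model is a probabilistic Turing machine with a random tape, so we can combine two near-optimal machines by a coin flip. First dispose of the trivial cases. If $\kappa(\theta_1)=\infty$ or $\kappa(\theta_2)=\infty$ (and the corresponding weight is positive), the right-hand side is $+\infty$ and there is nothing to prove; if $\gamma\in\{0,1\}$ the inequality is an identity. So assume both values are finite and $\gamma\in(0,1)$.

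Because the definition of $\kappa$ is written as a minimum that need not be attained, work with an arbitrary $\epsilon>0$. Pick AOTMs $\mathcal{M}_1=(M_1,\mathcal{O})$ and $\mathcal{M}_2=(M_2,\mathcal{O})$ with
\[
\mathbb{E}_{x\sim\mathcal{D}_X}[S(x,\mathcal{M}_i(x))]\ge\theta_i
\quad\text{and}\quad
\mathrm{TOK}_{\mathcal{M}_i}\le\kappa(\theta_i)+\epsilon .
\]
If the minimum is attained, take $\epsilon=0$. Both machines use the same oracle $\mathcal{O}$, since the oracle is fixed at Level 2 and independent of machine design.

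Next, build the mixture machine $\mathcal{M}_\gamma=(M_\gamma,\mathcal{O})$. On input $x$, $M_\gamma$ reads bits from its random tape to select branch 1 with probability $\gamma$ and branch 2 with probability $1-\gamma$, independently of $x$ and of all oracle responses. It then simulates $M_1$ or $M_2$ on the remaining random tape, issuing exactly the queries that machine would issue.

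The coin flip itself sends no query, so the token counts decompose by conditioning on the branch. By linearity of expectation,
\[
\mathrm{TOK}_{\mathcal{M}_\gamma}=\gamma\,\mathrm{TOK}_{\mathcal{M}_1}+(1-\gamma)\,\mathrm{TOK}_{\mathcal{M}_2},
\]
and in exactly the same way
\[
\mathbb{E}[S(x,\mathcal{M}_\gamma(x))]=\gamma\,\mathbb{E}[S(x,\mathcal{M}_1(x))]+(1-\gamma)\,\mathbb{E}[S(x,\mathcal{M}_2(x))]\ge\gamma\theta_1+(1-\gamma)\theta_2 .
\]
Hence $\mathcal{M}_\gamma$ is feasible at the mixed threshold, and
\[
\kappa(\gamma\theta_1+(1-\gamma)\theta_2)\le\gamma\kappa(\theta_1)+(1-\gamma)\kappa(\theta_2)+\epsilon .
\]
Letting $\epsilon\to0$ gives the claim.

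The main obstacle is the selection step when $\gamma$ is irrational or not a dyadic rational: a PTM cannot choose a branch with probability exactly $\gamma$ using a bounded number of fair bits. I would handle this by reading bits until the binary expansion of the sampled uniform real is compared against $\gamma$. This terminates with probability 1, uses no oracle tokens, and selects branch 1 with probability exactly $\gamma$. Since only token cost and quality appear in $\kappa$, the unbounded but almost-surely finite time is harmless. This step assumes $\gamma$'s bits can be hard-wired into or computed by $M_\gamma$. If that is problematic, the fallback is to approximate $\gamma$ by dyadic rationals and absorb the error into $\epsilon$ through a limiting argument. That fallback needs care on the quality side, because a slightly smaller weight on branch 1 could drop quality below the threshold; I would resolve this by approximating the branch weight from the side of whichever of $\theta_1,\theta_2$ is larger.

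A secondary point to verify is that the two conditional token expectations are well defined. This follows from the finiteness of $\mathrm{TOK}_{\mathcal{M}_i}$.
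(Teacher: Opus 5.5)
Your proposal is correct and follows the same route as the paper. A $\gamma$-biased coin drawn from the random tape, independent of $x$, selects between $\mathcal{M}_1$ and $\mathcal{M}_2$; linearity of expectation then gives both the quality bound $\gamma\theta_1+(1-\gamma)\theta_2$ and the cost identity $\mathrm{TOK}_{\mathcal{M}_\gamma}=\gamma\,\mathrm{TOK}_{\mathcal{M}_1}+(1-\gamma)\,\mathrm{TOK}_{\mathcal{M}_2}$, after which one optimizes over $\mathcal{M}_1,\mathcal{M}_2$. Your extra care---the infinite cases, $\epsilon$-near-optimal machines when the minimum is not attained, and realizing a non-dyadic $\gamma$ by bit comparison or by one-sided dyadic approximation toward the branch with the larger threshold---is sound and fills in details the paper leaves implicit.
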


\begin{proof}
Let $\mathcal{M}_1$ and $\mathcal{M}_2$ be AOTMs achieving quality 
$\theta_1$ and $\theta_2$ with token costs $c_1 = 
\mathrm{TOK}_{\mathcal{M}_1}$ and $c_2 = \mathrm{TOK}_{\mathcal{M}_2}$ 
respectively. Construct $\mathcal{M}_\gamma$ as follows: as a 
preprocessing step, $\mathcal{M}_\gamma$ flips a biased coin with 
heads probability $\gamma$ using its random tape, independently of 
the input $x$. It then runs $\mathcal{M}_1$ on heads and $\mathcal{M}_2$ 
on tails. By linearity of expectation:
\begin{align*}
\mathbb{E}_{x \sim \mathcal{D}_X}[S(x, \mathcal{M}_\gamma(x))] 
&= \gamma \cdot \mathbb{E}_{x \sim \mathcal{D}_X}[S(x, \mathcal{M}_1(x))] 
+ (1-\gamma) \cdot \mathbb{E}_{x \sim \mathcal{D}_X}[S(x, 
\mathcal{M}_2(x))] \\
&\geq \gamma \cdot \theta_1 + (1-\gamma) \cdot \theta_2, \\
\mathrm{TOK}_{\mathcal{M}_\gamma} 
&= \gamma \cdot \mathrm{TOK}_{\mathcal{M}_1} + (1-\gamma) \cdot 
\mathrm{TOK}_{\mathcal{M}_2} \\
&= \gamma \cdot c_1 + (1-\gamma) \cdot c_2.
\end{align*}
So $\mathcal{M}_\gamma$ is feasible at threshold $\gamma \cdot \theta_1 
+ (1-\gamma) \cdot \theta_2$, giving:
\[
\kappa(\gamma \cdot \theta_1 + (1-\gamma) \cdot \theta_2) \leq 
\gamma \cdot c_1 + (1-\gamma) \cdot c_2.
\]
Since this holds for any $\mathcal{M}_1$ achieving $\theta_1$ and any 
$\mathcal{M}_2$ achieving $\theta_2$, taking the minimum over all such 
pairs gives the result. 
\end{proof}

\begin{corollary}[Increasing Marginal Cost]
For any $\theta_1 < \theta_2 < \theta_3$:
\[
\frac{\kappa(\theta_2) - \kappa(\theta_1)}{\theta_2 - \theta_1} \leq 
\frac{\kappa(\theta_3) - \kappa(\theta_2)}{\theta_3 - \theta_2}.
\]
\end{corollary}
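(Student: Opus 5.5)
The plan is to derive the corollary directly from the Convexity theorem by writing the middle point $\theta_2$ as a convex combination of the outer points $\theta_1$ and $\theta_3$. Set
\[
\gamma = \frac{\theta_3 - \theta_2}{\theta_3 - \theta_1} \in (0,1),
\]
so that $\gamma\theta_1 + (1-\gamma)\theta_3 = \theta_2$. Apply the Convexity theorem with the pair $(\theta_1,\theta_3)$ and this $\gamma$. This gives
\[
\kappa(\theta_2) \leq \gamma\,\kappa(\theta_1) + (1-\gamma)\,\kappa(\theta_3).
\]

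Next, rearrange this inequality in two ways. First, subtract $\kappa(\theta_1)$ from both sides and use $1-\gamma = (\theta_2-\theta_1)/(\theta_3-\theta_1)$. This yields
\[
\kappa(\theta_2)-\kappa(\theta_1) \le \frac{\theta_2-\theta_1}{\theta_3-\theta_1}\,\bigl(\kappa(\theta_3)-\kappa(\theta_1)\bigr).
\]
Second, subtract $\kappa(\theta_3)$ from both sides instead. This yields
\[
\kappa(\theta_3)-\kappa(\theta_2) \ge \frac{\theta_3-\theta_2}{\theta_3-\theta_1}\,\bigl(\kappa(\theta_3)-\kappa(\theta_1)\bigr).
\]
Dividing the first by $\theta_2-\theta_1>0$ and the second by $\theta_3-\theta_2>0$ sandwiches the chord slope $(\kappa(\theta_3)-\kappa(\theta_1))/(\theta_3-\theta_1)$ between the two difference quotients. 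That is exactly the claimed three-slope inequality.

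The main obstacle is that $\kappa$ may take the value $+\infty$ when the feasible set is empty, and then the differences above are not well-defined real numbers. I would handle this by case analysis, working under the implicit hypotheses that $\theta_1,\theta_3 \in (0,1]$ and that the quotients are meaningful.

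If $\kappa(\theta_3)<\infty$, then Monotonicity gives $\kappa(\theta_1)\le\kappa(\theta_2)\le\kappa(\theta_3)<\infty$, so all three values are finite and the algebra above is valid.

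If $\kappa(\theta_3)=\infty$ but $\kappa(\theta_2)<\infty$, then Monotonicity gives $\kappa(\theta_1)<\infty$ as well. The left-hand quotient is then finite and the right-hand quotient is $+\infty$, so the inequality holds trivially in the extended reals.

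If $\kappa(\theta_2)=\infty$, then the right-hand quotient has the form $\infty-\infty$. In that case I would state that the corollary is restricted to the region where $\kappa$ is finite; a GQ oracle makes this region nonempty, and it is an interval by Monotonicity.

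A minor secondary point is that the Convexity proof uses minimizers $\mathcal{M}_1,\mathcal{M}_2$ attaining the minimum. If the minimum is only an infimum, the same argument with $\varepsilon$-optimal machines still yields the convexity inequality used here, so nothing changes.
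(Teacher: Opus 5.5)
Your proposal is correct and follows the paper's proof sketch: you choose the same $\gamma = (\theta_3-\theta_2)/(\theta_3-\theta_1)$, appeal directly to the Convexity theorem, and your two rearrangements spell out the three-chord algebra that the paper leaves implicit. Your case analysis for $\kappa = +\infty$ and your remark about $\varepsilon$-optimal machines when the minimum is not attained are sound refinements that the paper's sketch does not address.
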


\begin{proof}[Proof Sketch]
Write $\theta_2 = \gamma\theta_1 + (1-\gamma)\theta_3$ where 
$\gamma = (\theta_3 - \theta_2)/(\theta_3 - \theta_1) \in (0,1)$. 
The inequality follows directly from convexity of $\kappa(\theta)$. 

\end{proof}

\begin{remark}
The corollary says that the token cost per unit of quality is 
non-decreasing: moving from $\theta_1$ to $\theta_2$ costs fewer 
tokens per unit of quality than moving from $\theta_2$ to $\theta_3$. 
Quality improvements become progressively more expensive as quality 
approaches the ceiling $\bar{\theta}$. Intuitively, the first 
improvement is cheap---a slightly better query suffices---but as 
quality approaches $\bar{\theta}$, each additional gain requires 
progressively more tokens: more careful query formulation, more oracle 
calls, more output verification, until the cost becomes infinite at 
the ceiling.
\end{remark}

\begin{theorem}[Price Sensitivity]\label{thm:price_sensitivity}
Let $T = (X, Y, S, \mathcal{D}_X)$ be a task. For any $\alpha, \beta, 
\alpha', \beta' > 0$ and any $\theta \in (0,1]$, let $\mathcal{M}_1$ 
and $\mathcal{M}_2$ be feasible AOTMs for $\kappa_T(\theta; \alpha, 
\beta)$ and $\kappa_T(\theta; \alpha', \beta')$ respectively. Then:
\begin{align*}
|\kappa_T(\theta; \alpha, \beta) - \kappa_T(\theta; \alpha', \beta')| 
&\leq \max(\mathrm{tok}_{\mathcal{M}_1,Q}, \mathrm{tok}_{\mathcal{M}_2,Q}) 
\cdot |\alpha - \alpha'| \\
&\quad + \max(\mathrm{tok}_{\mathcal{M}_1,R}, 
\mathrm{tok}_{\mathcal{M}_2,R}) \cdot |\beta - \beta'|.
\end{align*}
\end{theorem}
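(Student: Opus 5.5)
The plan is the standard ``cross-evaluation'' argument: evaluate each price environment's optimal machine at the other prices. I read $\mathcal{M}_1$ and $\mathcal{M}_2$ as \emph{attaining} the minima in the definition of token complexity, i.e.\ $\mathrm{TOK}_{\mathcal{M}_1}(\alpha,\beta)=\kappa_T(\theta;\alpha,\beta)$ and $\mathrm{TOK}_{\mathcal{M}_2}(\alpha',\beta')=\kappa_T(\theta;\alpha',\beta')$. The definition is written with a $\min$, so such minimizers are taken to exist. Both values are finite because the feasible set contains these machines.

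The first key observation is that two things are price-independent. The feasibility constraint $\mathbb{E}_{x\sim\mathcal{D}_X}[S(x,\mathcal{M}(x))]\ge\theta$ does not involve $(\alpha,\beta)$. Hence the feasible set of AOTMs is the same at $(\alpha,\beta)$ and at $(\alpha',\beta')$: $\mathcal{M}_2$ is feasible for the first problem and $\mathcal{M}_1$ for the second. Likewise the expected counts $\mathrm{tok}_{\mathcal{M},Q}$ and $\mathrm{tok}_{\mathcal{M},R}$ depend only on the machine, the oracle and $\mathcal{D}_X$, not on prices. So $\mathrm{TOK}_{\mathcal{M}}(\alpha,\beta)$ is linear in $(\alpha,\beta)$ with these fixed coefficients.

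Next I derive the two one-sided bounds. Using minimality at $(\alpha,\beta)$ and linearity,
\[
\kappa_T(\theta;\alpha,\beta)\le \mathrm{TOK}_{\mathcal{M}_2}(\alpha,\beta)
=\kappa_T(\theta;\alpha',\beta')+\mathrm{tok}_{\mathcal{M}_2,Q}(\alpha-\alpha')+\mathrm{tok}_{\mathcal{M}_2,R}(\beta-\beta').
\]
The right-hand side is at most $\kappa_T(\theta;\alpha',\beta')+\mathrm{tok}_{\mathcal{M}_2,Q}|\alpha-\alpha'|+\mathrm{tok}_{\mathcal{M}_2,R}|\beta-\beta'|$, since the token counts are nonnegative. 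Symmetrically, using $\mathcal{M}_1$ at $(\alpha',\beta')$ gives $\kappa_T(\theta;\alpha',\beta')-\kappa_T(\theta;\alpha,\beta)\le \mathrm{tok}_{\mathcal{M}_1,Q}|\alpha-\alpha'|+\mathrm{tok}_{\mathcal{M}_1,R}|\beta-\beta'|$. Bounding each count by the maximum over the two machines, the two inequalities combine into the stated bound on the absolute difference.

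The only real obstacle is the hypothesis as worded. If $\mathcal{M}_1$ and $\mathcal{M}_2$ were merely feasible rather than optimal, the chain breaks at the equality $\mathrm{TOK}_{\mathcal{M}_2}(\alpha',\beta')=\kappa_T(\theta;\alpha',\beta')$, and the claim can fail. I would therefore state the optimality reading explicitly at the start.

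If minima are not attained (an infimum), I would use the same argument with $\varepsilon$-optimal machines. This yields the bound up to an additive $2\varepsilon$, with token counts of those near-optimal machines, and letting $\varepsilon\to0$ recovers the bound only when the token counts stay bounded. I would note this as a caveat rather than pursue it.
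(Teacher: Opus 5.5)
Your proof is correct and follows the same cross-evaluation route as the paper. The quality constraint does not depend on prices, so each machine is feasible at the other price pair. And $\mathrm{TOK}_{\mathcal{M}}$ is linear in $(\alpha,\beta)$ with price-independent coefficients.

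Where you differ from the paper, you are right and the paper is not. The paper treats $\mathcal{M}_1,\mathcal{M}_2$ as merely feasible and writes two upper bounds, $\kappa_T(\theta;\alpha,\beta)\le \alpha\,\mathrm{tok}_{\mathcal{M}_2,Q}+\beta\,\mathrm{tok}_{\mathcal{M}_2,R}$ and $\kappa_T(\theta;\alpha',\beta')\le \alpha'\,\mathrm{tok}_{\mathcal{M}_2,Q}+\beta'\,\mathrm{tok}_{\mathcal{M}_2,R}$. It then subtracts them, which is not a valid inference. The step goes through only if the second bound is an equality, i.e.\ if $\mathcal{M}_2$ attains $\kappa_T(\theta;\alpha',\beta')$; symmetrically, $\mathcal{M}_1$ must attain $\kappa_T(\theta;\alpha,\beta)$. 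That is exactly the optimality reading you state up front.

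Your warning that the literal statement can fail is also correct. Take $\alpha=\alpha'$, $\beta>\beta'$, and let $\mathcal{M}^*$ be a minimizer at $(\alpha,\beta)$. Testing $\mathcal{M}^*$ at $(\alpha,\beta')$ gives $\kappa_T(\theta;\alpha,\beta)-\kappa_T(\theta;\alpha,\beta')\ge \mathrm{tok}_{\mathcal{M}^*,R}\,(\beta-\beta')$. Now choose $\mathcal{M}_1=\mathcal{M}_2$ to be any feasible machine with fewer expected response tokens than $\mathcal{M}^*$ (paying with extra query tokens). Then the right-hand side of the claimed bound falls below the left-hand side.

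One minor slip: feasibility alone does not make the two minima finite. You need the minimizers to have finite expected token counts, an edge case the paper also passes over.
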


\begin{proof}
Since the quality constraint $\mathbb{E}_{x \sim \mathcal{D}_X}
[S(x, \mathcal{M}_1(x))] \geq \theta$ depends only on $\mathcal{M}_1$ 
and $S$, not on prices, $\mathcal{M}_1$ satisfies the quality constraint 
at $\theta$ under $(\alpha', \beta')$. Similarly, $\mathcal{M}_2$ 
satisfies the quality constraint at $\theta$ under $(\alpha, \beta)$. 
Each is therefore a feasible solution: $\mathcal{M}_2$ for 
$\kappa_T(\theta; \alpha, \beta)$ and $\mathcal{M}_1$ for 
$\kappa_T(\theta; \alpha', \beta')$.

Therefore:
\[
\kappa_T(\theta; \alpha, \beta) \leq \alpha \cdot 
\mathrm{tok}_{\mathcal{M}_2,Q} + \beta \cdot \mathrm{tok}_{\mathcal{M}_2,R}.
\]
Since $\mathcal{M}_2$ is also feasible under $(\alpha', \beta')$:
\[
\kappa_T(\theta; \alpha', \beta') \leq \alpha' \cdot 
\mathrm{tok}_{\mathcal{M}_2,Q} + \beta' \cdot \mathrm{tok}_{\mathcal{M}_2,R}.
\]
Subtracting:
\begin{align*}
\kappa_T(\theta; \alpha, \beta) - \kappa_T(\theta; \alpha', \beta')
&\leq (\alpha - \alpha') \cdot \mathrm{tok}_{\mathcal{M}_2,Q} + 
(\beta - \beta') \cdot \mathrm{tok}_{\mathcal{M}_2,R} \\
&\leq \mathrm{tok}_{\mathcal{M}_2,Q} \cdot |\alpha - \alpha'| + 
\mathrm{tok}_{\mathcal{M}_2,R} \cdot |\beta - \beta'|.
\end{align*}
The symmetric argument gives:
\[
\kappa_T(\theta; \alpha', \beta') - \kappa_T(\theta; \alpha, \beta) 
\leq \mathrm{tok}_{\mathcal{M}_1,Q} \cdot |\alpha - \alpha'| + 
\mathrm{tok}_{\mathcal{M}_1,R} \cdot |\beta - \beta'|.
\]
Combining the two one-sided bounds gives the absolute value 
inequality. 
\end{proof}

\begin{remark}
Price-sensitivity shows that
small price changes produce bounded changes in token complexity, making 
the effect of API price changes on optimal cost predictable and 
amenable to sensitivity analysis.
\end{remark}

\begin{definition}[Price-Specific Dominance Order]
\label{def:price_specific_dominance}
Let $T_1$ and $T_2$ be tasks and let $\alpha, \beta > 0$. We say 
$T_1 \preceq_\kappa^{(\alpha,\beta)} T_2$ if 
$\kappa_{T_1}(\theta; \alpha, \beta) \leq \kappa_{T_2}(\theta; 
\alpha, \beta)$ for all $\theta \in (0,1]$.
\end{definition}

\begin{theorem}[Price-Relativity of Task Ordering]
\label{thm:price_relativity}
There exist tasks $T_1$, $T_2$ and a threshold $K > 0$ such that 
$T_1 \preceq_\kappa^{(\alpha, \beta)} T_2$ when $\alpha / \beta > K$, 
$T_2 \preceq_\kappa^{(\alpha, \beta)} T_1$ when $\alpha / \beta < K$, 
and $\kappa_{T_1}(\theta; \alpha, \beta) = \kappa_{T_2}(\theta; 
\alpha, \beta)$ for all $\theta$ when $\alpha / \beta = K$.
\end{theorem}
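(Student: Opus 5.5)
The plan is an explicit construction. I will build two tasks whose optimal machines have fixed, known token profiles that do not depend on $\theta$. $T_1$ is ``query-light, response-heavy'': its optimal machine uses $a_1$ query tokens and $b_1$ response tokens. $T_2$ is ``query-heavy, response-light'' with $(a_2,b_2)$. I will choose $a_1 < a_2$ and $b_1 > b_2$. Then
\[
\kappa_{T_1}-\kappa_{T_2} = \alpha(a_1-a_2)+\beta(b_1-b_2).
\]
Its sign is that of $\beta(b_1-b_2)-\alpha(a_2-a_1)$, which is nonnegative exactly when $\alpha/\beta \le K := (b_1-b_2)/(a_2-a_1)$.

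To get the dominance the statement requires, the difference must be $\le 0$ when $\alpha/\beta > K$, since $T_1$ is then cheaper. I should therefore pair $T_1$ with more query tokens and fewer response tokens, and adjust the labelling accordingly. That is, take $a_1>a_2$ and $b_1<b_2$ and set $K=(b_2-b_1)/(a_1-a_2)$. Then $\kappa_{T_1}\le\kappa_{T_2}$ iff $\alpha(a_1-a_2)\le \beta(b_2-b_1)$. That gives $\alpha/\beta\le K$, which is the wrong direction.

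Redoing this: for $T_1$ to be cheaper when query tokens are expensive (large $\alpha/\beta$), $T_1$ must use fewer query tokens. So the first choice, $a_1<a_2$ and $b_1>b_2$, is correct. With it, $\kappa_{T_1}\le\kappa_{T_2}$ iff $\beta(b_1-b_2)\le\alpha(a_2-a_1)$, i.e.\ iff $\alpha/\beta \ge K$, with equality of $\kappa$ exactly at $\alpha/\beta = K$.

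The key difficulty is making $\kappa_{T_i}(\theta;\alpha,\beta)$ equal to $\alpha a_i+\beta b_i$ for \emph{every} $\theta\in(0,1]$ and every price pair. This requires controlling the minimum over all AOTMs, including adaptive ones, so I will fix a specific oracle in the construction. Each $T_i$ will have a single instance $x_i$ (so $\mathcal{D}_X=\delta_{x_i}$) and binary score $S(x_i,y)=1$ iff $y=y_i$. The oracle is deterministic, in the style of the Dirac constructions in Section~\ref{Structural Relations Between Conditions}. It returns the response $r_i$, which decodes to $y_i$, only on one designated query string $q_i$, and returns a fixed one-token junk response elsewhere.

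The target $y_i$ is chosen to be incompressible relative to $M$, so $M$ cannot produce it without the oracle. I will enforce this through the modelling assumption that $M$ lacks the oracle's knowledge. More concretely, I will choose $y_i$ so that the score is only achievable via the oracle's response, for example by letting $y_i$ be a secret string known only to $\mathcal{O}$.

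I then need $|\tau(q_i)|=a_i$ and $|\tau(r_i)|=b_i$. I will also make every query shorter than $a_i$ return a junk response, so no cheaper path exists. Under these conditions every feasible machine at any $\theta>0$ must at some point issue $q_i$ and receive $r_i$. Its expected cost is therefore at least $\alpha a_i+\beta b_i$, and the machine that sends exactly $q_i$ once and outputs $\tau^{-1}$ of the response attains this bound.

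A subtlety is that a machine might succeed only with probability $\theta$ by querying $q_i$ with probability $\theta$. That would give cost $\theta(\alpha a_i+\beta b_i)$, which is still proportional across the two tasks. So even if $\kappa_{T_i}(\theta)=\theta c_i$ or similar, the sign comparison is unchanged. In fact both tasks scale by the same factor of $\theta$, provided the junk calls cost essentially nothing. To keep this clean I will make the junk cost zero by taking the empty response, or simply argue that the minimizer never makes junk calls. The main obstacle is this lower-bound step, so I will state the oracle-knowledge assumption explicitly. With that in hand, the comparison of $\kappa_{T_1}$ and $\kappa_{T_2}$ reduces to comparing $\alpha a_i+\beta b_i$ (times a common factor), and the three regimes around $K$ follow.
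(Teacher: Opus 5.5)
The lower-bound step is where your proof fails. Within the paper's definitions, no choice of oracle or secret can rescue it.

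$\kappa_T$ is a minimum over all AOTMs, hence over all PTMs $M$ (and, as the definition is literally written, over all oracles as well). Each of your tasks has a single instance $x_i$ with a fixed target $y_i$. So the PTM that writes $y_i$ on its output tape and never calls the oracle is feasible at every $\theta\in(0,1]$ with $\mathrm{TOK}=0$.

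Neither of your safeguards blocks this machine. Incompressibility of $y_i$ does not matter, because the description length of $M$ is not a charged resource. The paper's remark that $M$ ``does not have access to any data used to train $\mathcal{O}$'' is motivation, not a restriction on the class over which the minimum is taken. Hence $\kappa_{T_1}\equiv\kappa_{T_2}\equiv 0$, not $\theta(\alpha a_i+\beta b_i)$.

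The trick is general. For any task, hardcode near-optimal answers on a finite set of instances carrying enough $\mathcal{D}_X$-mass. This gives $\kappa_T(\theta)=0$ for every $\theta<\mathbb{E}_{x\sim\mathcal{D}_X}[\sup_y S(x,y)]$, while $\kappa_T(\theta)=\infty$ above that value, regardless of $(\alpha,\beta)$. So the lower bound you need has to be imposed as an extra hypothesis; it cannot be derived.

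One way to impose it is as follows. Index each task by a secret $s\in\{0,1\}^n$ known only to the oracle, using the same $n$ for both tasks and giving every response $r_s$ token length $b_i$. Then require a single $M$ to succeed for every $s$. Averaging over $s$ forces $M$ to issue $q_i$ with probability at least $(\theta-2^{-n})/(1-2^{-n})$. Your common-factor observation then finishes the comparison. This is, however, a modified, uniform notion of $\kappa$, and you should say so explicitly.

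For comparison, the paper's proof constructs nothing. It posits $T_1,T_2$ whose optimal AOTMs have opposite query/response imbalance. It then treats those two machines as optimal for every price pair, and implicitly for every $\theta$, and solves a linear equation for $K$. Your construction tries to discharge exactly that unproved premise, which is the right instinct. But the hardcoding objection applies equally to the paper's premise.

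Because $\preceq_\kappa^{(\alpha,\beta)}$ is non-strict, the statement as literally written is satisfied by $T_1=T_2$ for any $K>0$, and also by your two tasks with $\kappa\equiv 0$. Its real content is a strict reversal, and that is precisely what requires the lower bound.

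On orientation, your final choice is correct: $a_1<a_2$, $b_1>b_2$, $K=(b_1-b_2)/(a_2-a_1)$, so $T_1$ is cheaper when $\alpha/\beta>K$, matching the statement. The paper's proof instead makes $T_1$ query-heavy and so obtains $T_1\preceq T_2$ for $\alpha/\beta<K$, which reverses the statement's labelling. Drop your false-start paragraph from any write-up.
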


\begin{proof}
Choose tasks $T_1$ and $T_2$ whose optimal AOTMs $\mathcal{M}_1$ 
and $\mathcal{M}_2$ at the same quality threshold $\theta$ satisfy 
$\mathrm{tok}_{\mathcal{M}_1, Q} > \mathrm{tok}_{\mathcal{M}_2, Q}$ 
and $\mathrm{tok}_{\mathcal{M}_2, R} > \mathrm{tok}_{\mathcal{M}_1, R}$: 
$T_1$ uses more query tokens and $T_2$ uses more response tokens. 
For any prices $(\alpha, \beta)$, the token complexities are:
\begin{align*}
\kappa_{T_1}(\theta; \alpha, \beta) &= \alpha \cdot 
\mathrm{tok}_{\mathcal{M}_1, Q} + \beta \cdot 
\mathrm{tok}_{\mathcal{M}_1, R}, \\
\kappa_{T_2}(\theta; \alpha, \beta) &= \alpha \cdot 
\mathrm{tok}_{\mathcal{M}_2, Q} + \beta \cdot 
\mathrm{tok}_{\mathcal{M}_2, R}.
\end{align*}
The difference is:
\begin{align}
\kappa_{T_1}(\theta; \alpha, \beta) - \kappa_{T_2}(\theta; \alpha, 
\beta) = \alpha(\mathrm{tok}_{\mathcal{M}_1, Q} - 
\mathrm{tok}_{\mathcal{M}_2, Q}) + \beta(\mathrm{tok}_{\mathcal{M}_1, 
R} - \mathrm{tok}_{\mathcal{M}_2, R}).\label{eq:difference}
\end{align}
Dividing by $\beta > 0$ and setting the difference to zero, 
solving for $\alpha / \beta$ gives the threshold:
\[
K = \frac{\mathrm{tok}_{\mathcal{M}_2, R} - 
\mathrm{tok}_{\mathcal{M}_1, R}}{\mathrm{tok}_{\mathcal{M}_1, Q} - 
\mathrm{tok}_{\mathcal{M}_2, Q}}.
\]
Since both numerator and denominator are strictly positive by 
assumption, we have $K > 0$. When $\alpha / \beta < K$ the 
difference~(\ref{eq:difference}) is negative, giving 
$T_1 \preceq_\kappa^{(\alpha, \beta)} T_2$; when $\alpha / \beta 
> K$ it is positive, giving $T_2 \preceq_\kappa^{(\alpha, \beta)} 
T_1$; and when $\alpha / \beta = K$ the two complexities are equal.
\end{proof}

\begin{remark}
Price-relativity shows that the relative token cost of two tasks is 
not an intrinsic property of the tasks but depends on the price ratio 
$\alpha / \beta$. This contrasts with classical complexity, where the 
hardness ordering between problems is intrinsic and independent of 
resource pricing.
\end{remark}

\section{Properties of the Complexity Frontier}
\label{Properties of the Complexity Frontier}

The complexity frontier $F_T(\theta; X_n)$ is the set of all resource 
bounds $(k, t, s)$ under which task $T$ can be solved at expected quality 
at least $\theta$ over instance subset $X_n \subseteq X$. The full 
frontier $F_T(\theta) = F_T(\theta; X)$ is the case $X_n = X$. The 
boundary $\partial F_T(\theta; X_n)$---the set of resource bounds where 
no coordinate can be reduced while maintaining feasibility---is the 
tradeoff surface. Fixing any one resource bound and projecting onto the 
remaining two yields the pairwise tradeoff curves. This section 
establishes three structural properties: (1) $F_T(\theta; X_n)$ is 
non-empty for any finite $X_n$ whenever $\theta$ is achievable, (2) the 
frontier is upward-closed, meaning any resource bound dominating a 
feasible bound is also feasible, and (3) the tradeoff surface is convex, 
from which the convexity of all pairwise tradeoff curves follows.

\begin{remark}
The complexity frontier is a geometric object with no direct classical 
analog. Classical complexity theory studies resources one at a time, 
classifying problems by asymptotic bounds on a single resource; 
time-space tradeoff results~\cite{HopcroftPaulValiant1977} establish 
relationships between time and space as asymptotic bounds, not as a 
joint feasible region. The complexity frontier departs from this in 
three ways: it is exact rather than asymptotic, it is defined jointly 
over three resources, and it is parameterized by a continuous quality 
threshold $\theta$. The tradeoff surface $\partial F_T(\theta; X_n)$ 
is the natural geometric object capturing how token cost, time, and 
space can be substituted for one another at a given quality level---an object that has no counterpart in classical complexity theory precisely because classical complexity has no stochastic oracle, no token cost, and no quality parameter.
\end{remark}

\begin{theorem}[Non-Emptiness]\label{thm:frontier_nonempty}
For any finite $X_n \subseteq X$ and any $\theta \in (0, 1]$ such 
that there exists an AOTM achieving expected quality at least $\theta$ 
on $X_n$, the complexity frontier $F_T(\theta; X_n)$ is non-empty.
\end{theorem}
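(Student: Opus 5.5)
The plan is to exhibit an explicit point of $F_T(\theta; X_n)$ using the AOTM $\mathcal{M} = (M, \mathcal{O})$ supplied by the hypothesis. By assumption $\mathbb{E}_{x \sim \mathcal{D}_X|_{X_n}}[S(x, \mathcal{M}(x))] \geq \theta$, so the fourth condition in the definition of the frontier already holds. It then suffices to choose real numbers $k$, $t$, $s$ that dominate the three resource quantities of this same $\mathcal{M}$. We will take
\[
k^* = \mathbb{E}_{x \sim \mathcal{D}_X|_{X_n}}[\mathrm{tok}_{\mathcal{M}}(x)], \quad
t^* = \mathbb{E}_{x \sim \mathcal{D}_X|_{X_n}}[\mathrm{TIME}_{\mathcal{M}}(x)], \quad
s^* = \max_{x \in X_n} \mathrm{SPACE}_{\mathcal{M}}(x),
\]
and show that $(k^*, t^*, s^*) \in \mathbb{R}^3$. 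All three inequalities then hold with equality, and the same witness $\mathcal{M}$ certifies membership.

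The order of steps is as follows. First, the space coordinate is a maximum over a finite nonempty set. Nonemptiness is implicit, since the conditional distribution $\mathcal{D}_X|_{X_n}$ must be defined, so $X_n$ has positive mass. Finiteness of $X_n$ is exactly the hypothesis, so the maximum exists and is a real number once each $\mathrm{SPACE}_{\mathcal{M}}(x)$ is finite. Second, the token and time coordinates are expectations over a finite support, that is, finite weighted sums of per-instance values. Each is therefore finite once each per-instance value is finite. The expectation over the oracle and the random tape is folded into these per-instance quantities.

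The main obstacle is finiteness of the per-instance resource quantities. Nothing stated so far forces an arbitrary AOTM to halt, or to have finite expected token count or time on every instance. For example, a machine that retries until success could have infinite expected cost. I would handle this in one of two ways. The first option is to read ``an AOTM achieving expected quality'' as an AOTM that halts and produces an output $\mathcal{M}(x)$, and to use the remark that oracle time and space are bounded by a function of the token count. The second, more robust option is to modify $\mathcal{M}$: truncate it after a fixed step budget $B$ and output a default $y_0$, choosing $B$ large enough that the quality loss is negligible. The difficulty with the second option is that truncation can lower quality strictly below $\theta$ when the original quality is exactly $\theta$. For this reason I will adopt the first reading and note the standing assumption that machines halt with finite expected resources on each instance. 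Given that assumption, the finite-support argument above gives $(k^*, t^*, s^*) \in F_T(\theta; X_n)$, so the frontier is non-empty.
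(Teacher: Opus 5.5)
Your argument is the paper's: take the AOTM $\mathcal{M}$ from the hypothesis as the witness and set $(k,t,s)$ equal to its expected token count, expected time, and maximum space over $X_n$, with finiteness coming from $X_n$ being finite. The one difference is in your favor: the paper asserts that $\mathcal{M}$ has finite per-instance resources ``since $X_n$ is finite,'' which does not follow (even almost-sure halting does not give finite expected token counts, e.g.\ under heavy-tailed response lengths), so the standing assumption you state explicitly is exactly the hidden hypothesis the paper's proof relies on.
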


\begin{proof}
Let $\mathcal{M}$ be an AOTM achieving expected quality at least 
$\theta$ on $X_n$. Since $X_n$ is finite, $\mathcal{M}$ terminates 
on every $x \in X_n$ in a finite number of transitions, uses a finite 
amount of tape, and issues a finite number of oracle calls each 
exchanging a finite number of tokens. TTaking expectations and maximum over the finite set $X_n$ gives 
finite values:
\begin{align*}
k &= \mathbb{E}_{x \sim \mathcal{D}_X|_{X_n}}
[\mathrm{tok}_{\mathcal{M}}(x)] < \infty, \\
t &= \mathbb{E}_{x \sim \mathcal{D}_X|_{X_n}}
[\mathrm{TIME}_{\mathcal{M}}(x)] < \infty, \\
s &= \max_{x \in X_n} \mathrm{SPACE}_{\mathcal{M}}(x) < \infty,
\end{align*}
so $(k, t, s) \in F_T(\theta; X_n)$. 
\end{proof}

\begin{theorem}[Upward-Closure]\label{thm:frontier_upward}
For any $X_n \subseteq X$, quality threshold $\theta$, and resource 
bound $(k, t, s) \in F_T(\theta; X_n)$, if $(k', t', s') \geq 
(k, t, s)$ componentwise, then $(k', t', s') \in F_T(\theta; X_n)$.
\end{theorem}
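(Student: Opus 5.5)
The plan is to prove the statement directly from the definition of $F_T(\theta; X_n)$, with no new construction: the same witnessing AOTM serves for the larger bound. Since $(k,t,s) \in F_T(\theta; X_n)$, the definition of the complexity frontier supplies an AOTM $\mathcal{M} = (M, \mathcal{O})$ satisfying all four defining constraints:
\[
\mathbb{E}_{x \sim \mathcal{D}_X|_{X_n}}[\mathrm{tok}_{\mathcal{M}}(x)] \leq k, \qquad
\mathbb{E}_{x \sim \mathcal{D}_X|_{X_n}}[\mathrm{TIME}_{\mathcal{M}}(x)] \leq t,
\]
together with $\max_{x \in X_n} \mathrm{SPACE}_{\mathcal{M}}(x) \leq s$ and $\mathbb{E}_{x \sim \mathcal{D}_X|_{X_n}}[S(x, \mathcal{M}(x))] \geq \theta$. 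I will fix this $\mathcal{M}$ and show it also witnesses $(k', t', s')$.

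The key steps, in order, are as follows. First, I chain each resource inequality with the componentwise hypothesis. The expected token count of $\mathcal{M}$ is at most $k \leq k'$. Its expected time is at most $t \leq t'$. Its worst-case space is at most $s \leq s'$. This uses only transitivity of $\leq$ on $\mathbb{R}$.

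Second, I observe that the quality constraint $\mathbb{E}_{x \sim \mathcal{D}_X|_{X_n}}[S(x, \mathcal{M}(x))] \geq \theta$ does not involve $(k,t,s)$ at all. It therefore holds unchanged for the same $\mathcal{M}$. This mirrors the reasoning in the Monotonicity theorem, where enlarging the slack only enlarges the feasible set.

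Third, all four conditions in the definition of $F_T(\theta; X_n)$ now hold for $\mathcal{M}$ with bounds $(k',t',s')$. Hence $(k',t',s') \in F_T(\theta; X_n)$.

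I expect no step to be a genuine obstacle. The one point worth stating explicitly is that, unlike Theorem~\ref{thm:frontier_nonempty}, no finiteness of $X_n$ is needed. This is because the defining constraints are upper bounds on quantities determined by $\mathcal{M}$ alone, and those quantities are already assumed to satisfy the smaller bounds. Even if some expectation were infinite, it would have to be at most $k < \infty$ by hypothesis, so the argument goes through for arbitrary $X_n \subseteq X$, matching the statement.
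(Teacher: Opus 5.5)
Your proposal is correct and takes the same approach as the paper, whose proof likewise reuses the witnessing AOTM $\mathcal{M}$, notes that the quality constraint does not depend on the resource bounds, and chains $\mathrm{tok}_{\mathcal{M}} \leq k \leq k'$, $\mathrm{TIME}_{\mathcal{M}} \leq t \leq t'$, $\mathrm{SPACE}_{\mathcal{M}} \leq s \leq s'$. Your remark that finiteness of $X_n$ is not needed is accurate, since the statement is made for arbitrary $X_n \subseteq X$.
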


\begin{proof}
Let $\mathcal{M}$ be an AOTM witnessing $(k, t, s) \in F_T(\theta; 
X_n)$. The quality constraint
\[
\mathbb{E}_{x \sim \mathcal{D}_X|_{X_n}}[S(x, \mathcal{M}(x))] 
\geq \theta
\]
depends only on $\mathcal{M}$ and $S$, not on the resource bounds, 
so $\mathcal{M}$ still achieves quality $\theta$ under $(k', t', s')$. 
Since $\mathcal{M}$'s resource usage satisfies 
$\mathrm{tok}_{\mathcal{M}} \leq k \leq k'$, 
$\mathrm{TIME}_{\mathcal{M}} \leq t \leq t'$, and 
$\mathrm{SPACE}_{\mathcal{M}} \leq s \leq s'$, $\mathcal{M}$ 
witnesses $(k', t', s') \in F_T(\theta; X_n)$. 
\end{proof}

\begin{remark}
Upward-closure means $F_T(\theta; X_n)$ is entirely determined by its 
lower boundary. Any resource bound on or above the lower boundary is 
feasible; any bound strictly below it is not. This justifies studying 
the lower boundary as the essential object: it is the minimal description 
of what is achievable at quality $\theta$.
\end{remark}

\begin{definition}[Tradeoff Surface]\label{def:tradeoff_surface}
The \textbf{tradeoff surface} of $F_T(\theta; X_n)$ is the lower 
boundary of $F_T(\theta; X_n)$, consisting of resource bounds where no 
coordinate can be strictly reduced while maintaining feasibility:
\[
\partial F_T(\theta; X_n) = \left\{(k, t, s) \in F_T(\theta; X_n) :
\begin{aligned}
&\forall\, (k', t', s') \in F_T(\theta; X_n), \\
&k' \leq k,\, t' \leq t,\, s' \leq s \implies (k', t', s') = (k, t, s)
\end{aligned}
\right\}.
\]
A pairwise \textbf{tradeoff curve} is a cross-section of 
$\partial F_T(\theta; X_n)$ obtained by fixing one resource bound at 
a constant value.
\end{definition}

\begin{theorem}[Convexity of Tradeoff Surface]\label{thm:frontier_convex}
For any $X_n \subseteq X$ and quality threshold $\theta$, the complexity 
frontier $F_T(\theta; X_n)$ is a convex set in $\mathbb{R}^3$. 
Consequently, $\partial F_T(\theta; X_n)$ is convex, and the TOK-TIME, 
TOK-SPACE, and TIME-SPACE tradeoff curves are all convex.
\end{theorem}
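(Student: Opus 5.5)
We prove convexity of $F_T(\theta; X_n)$ directly with the randomized mixture construction from the proof of the Convexity theorem in Section~\ref{Properties of Token Complexity}, applied to all three resources at once. The consequences for $\partial F_T(\theta; X_n)$ and the tradeoff curves then follow from elementary facts about convex sets.

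\textbf{Step 1: mixture witness.} Take $(k_1,t_1,s_1),(k_2,t_2,s_2)\in F_T(\theta;X_n)$ with witnesses $\mathcal{M}_1,\mathcal{M}_2$, and fix $\gamma\in[0,1]$. Define $\mathcal{M}_\gamma$ exactly as in the Convexity theorem. It flips a $\gamma$-biased coin on its random tape, independently of $x$, and then runs $\mathcal{M}_1$ on heads or $\mathcal{M}_2$ on tails, using the same oracle $\mathcal{O}$. We follow that proof's convention that this preprocessing flip is a single costless step. If one wants to account for it, it needs only a constant expected number of random bits (two for a standard lazy comparison against the binary expansion of $\gamma$). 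No oracle tokens are involved, so the token coordinate is unaffected in either case.

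\textbf{Step 2: tokens, time and quality.} For each fixed $x$, conditioning on the coin gives
\[
\mathbb{E}[\mathrm{tok}_{\mathcal{M}_\gamma}(x)]=\gamma\,\mathbb{E}[\mathrm{tok}_{\mathcal{M}_1}(x)]+(1-\gamma)\,\mathbb{E}[\mathrm{tok}_{\mathcal{M}_2}(x)].
\]
The same identity holds for $\mathrm{TIME}$ and for $S(x,\cdot)$. Averaging over $x\sim\mathcal{D}_X|_{X_n}$ and using linearity of expectation, $\mathcal{M}_\gamma$ satisfies:
\begin{itemize}
\item token bound $\gamma k_1+(1-\gamma)k_2$;
\item time bound $\gamma t_1+(1-\gamma)t_2$;
\item expected quality at least $\gamma\theta+(1-\gamma)\theta=\theta$.
\end{itemize}
Because both witnesses meet the same threshold $\theta$, the quality constraint is automatically preserved.

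\textbf{Step 3: space, the main obstacle.} Space is measured as $\max_{x\in X_n}$, and a maximum is not linear. If $\mathrm{SPACE}_\mathcal{M}(x)$ meant the worst case over the coin as well, the mixture would only satisfy $\max(s_1,s_2)$, which is not good enough. We therefore read $\mathrm{SPACE}_\mathcal{M}(x)$ as we read $\mathrm{tok}_\mathcal{M}(x)$ and $\mathrm{TIME}_\mathcal{M}(x)$: a quantity for the fixed instance $x$, taken in expectation over $M$'s random tape and the oracle's draws. The worst case in the definition of the frontier is then only over instances $x$.

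Under this reading, $\mathrm{SPACE}_{\mathcal{M}_\gamma}(x)=\gamma\,\mathrm{SPACE}_{\mathcal{M}_1}(x)+(1-\gamma)\,\mathrm{SPACE}_{\mathcal{M}_2}(x)$ for every $x$. Subadditivity of the maximum then gives
\[
\max_{x\in X_n}\mathrm{SPACE}_{\mathcal{M}_\gamma}(x)\le\gamma\max_{x\in X_n}\mathrm{SPACE}_{\mathcal{M}_1}(x)+(1-\gamma)\max_{x\in X_n}\mathrm{SPACE}_{\mathcal{M}_2}(x)\le\gamma s_1+(1-\gamma)s_2 .
\]
Hence $\gamma(k_1,t_1,s_1)+(1-\gamma)(k_2,t_2,s_2)\in F_T(\theta;X_n)$, and $F_T(\theta;X_n)$ is convex. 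I expect this to be the delicate point of the argument. The proof must state this reading of $\mathrm{SPACE}_\mathcal{M}(x)$ explicitly, since it is the only place where instance-wise worst case and randomness-wise expectation interact.

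\textbf{Step 4: consequences.} $F_T(\theta;X_n)$ is convex and, by Theorem~\ref{thm:frontier_upward}, upward-closed, so it is the epigraph-type region lying above its lower boundary. Saying that $\partial F_T(\theta;X_n)$ is convex means exactly that this region is convex, which we have shown. Each pairwise tradeoff curve is obtained by fixing one coordinate. That is the intersection of $F_T(\theta;X_n)$ with a plane, which is convex. Its lower boundary is therefore the graph of a convex function of the remaining coordinate, again by upward-closure. This gives convexity of the TOK--TIME, TOK--SPACE and TIME--SPACE curves.
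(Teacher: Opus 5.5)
Your mixture construction and Step~2 are the paper's. The two arguments part ways at space, and there yours is the sound one.

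The paper treats per-instance space as worst case over the coin, so it bounds $\mathcal{M}_\gamma$ by $\max(s_1,s_2)$. It then notes $\gamma s_1+(1-\gamma)s_2\le\max(s_1,s_2)$ and invokes upward closure. But that inequality puts the convex combination \emph{below} the feasible point $(\gamma k_1+(1-\gamma)k_2,\,\gamma t_1+(1-\gamma)t_2,\,\max(s_1,s_2))$. Theorem~\ref{thm:frontier_upward} only transfers feasibility upward, so the paper's last step fails, for exactly the reason you flag at the start of Step~3.

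Your reading of $\mathrm{SPACE}_{\mathcal{M}}(x)$ as an expectation over $M$'s random tape and the oracle draws makes per-instance space affine in $\gamma$. Your subadditivity step over $\max_{x\in X_n}$ then gives a valid proof.

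The price is that this is a change of resource model, not a proof step. It is in tension with the paper's rationale that space ``must be available simultaneously for all computations'' and with the paper's own proof, which treats space as worst case over the randomness. You should therefore state it as a definition, not as a reading of the existing text. In the worst-case-over-randomness model, the mixture argument yields only that each fixed-space slice $\{(k,t):(k,t,s_0)\in F_T(\theta;X_n)\}$ is convex, which gives convex TOK--TIME curves. Convexity in the space direction, and hence of the TOK--SPACE and TIME--SPACE curves, is proved by neither argument in that model.

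The same direction issue makes your costless-coin convention in Step~1 essential rather than cosmetic. If the coin is charged a positive expected time $c$, then $\mathcal{M}_\gamma$ witnesses time $\gamma t_1+(1-\gamma)t_2+c$. That lies above the convex combination, and upward closure cannot lower it.
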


\begin{proof}
Suppose $(k_1, t_1, s_1)$ and $(k_2, t_2, s_2)$ are both in 
$F_T(\theta; X_n)$, witnessed by AOTMs $\mathcal{M}_1$ and $\mathcal{M}_2$ 
respectively. For any $\gamma \in [0,1]$, construct $\mathcal{M}_\gamma$ 
as follows: as a preprocessing step, $\mathcal{M}_\gamma$ flips a biased 
coin with heads probability $\gamma$ using its random tape, independently 
of $x$. It then runs $\mathcal{M}_1$ on heads and $\mathcal{M}_2$ on 
tails. Since both achieve quality at least $\theta$, by linearity of 
expectation so does $\mathcal{M}_\gamma$. Since the coin flip is 
independent of $x$:
\begin{align*}
\mathbb{E}_{x \sim \mathcal{D}_X|_{X_n}}[\mathrm{tok}_{\mathcal{M}_\gamma}(x)]
  &= \gamma k_1 + (1-\gamma)k_2, \\
\mathbb{E}_{x \sim \mathcal{D}_X|_{X_n}}[\mathrm{TIME}_{\mathcal{M}_\gamma}(x)]
  &= \gamma t_1 + (1-\gamma)t_2.
\end{align*}
For space, $\mathcal{M}_\gamma$ uses at most $\max(s_1, s_2)$ on any 
instance. Since $\max(s_1, s_2) \geq \gamma s_1 + (1-\gamma)s_2$, the 
point $(\gamma k_1 + (1-\gamma)k_2,\ \gamma t_1 + (1-\gamma)t_2,\ 
\gamma s_1 + (1-\gamma)s_2)$ is dominated by $(\gamma k_1 + 
(1-\gamma)k_2,\ \gamma t_1 + (1-\gamma)t_2,\ \max(s_1, s_2))$, which 
is in $F_T(\theta; X_n)$, and upward-closure gives the result. Hence 
$F_T(\theta; X_n)$ is convex, its lower boundary $\partial F_T(\theta; 
X_n)$ is a convex surface, and any cross-section fixing one resource 
bound is a convex curve. 
\end{proof}

\begin{corollary}[Convexity of Tradeoff Curves]
Each pairwise tradeoff curve of $F_T(\theta; X_n)$ is convex.
\end{corollary}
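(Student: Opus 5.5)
The plan is to read the corollary off Theorem~\ref{thm:frontier_convex}, making precise what ``convex curve'' means for a cross-section. Fix one resource, say space at a constant $s_0$, and consider the slice
\[
G_{s_0} = \{(k,t) \in \mathbb{R}^2 : (k,t,s_0) \in F_T(\theta; X_n)\}.
\]
This is the intersection of the convex set $F_T(\theta; X_n)$ with the affine plane $\{s = s_0\}$, projected onto the $(k,t)$ coordinates. An intersection of convex sets is convex, and a linear projection preserves convexity, so $G_{s_0}$ is convex. By Theorem~\ref{thm:frontier_upward}, $G_{s_0}$ is also upward-closed in $(k,t)$. The same argument applies with token count or time held fixed instead of space.

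Next, I would show that the pairwise tradeoff curve is the graph of a convex function. Define
\[
\phi(k) = \inf\{t : (k,t) \in G_{s_0}\},
\]
with value $+\infty$ if no such $t$ exists. For $(k_1,t_1), (k_2,t_2) \in G_{s_0}$, convexity of $G_{s_0}$ puts the convex combination in $G_{s_0}$. Hence
\[
\phi(\gamma k_1 + (1-\gamma)k_2) \leq \gamma t_1 + (1-\gamma) t_2 .
\]
Taking the infimum over $t_1$ and $t_2$ gives $\phi(\gamma k_1 + (1-\gamma)k_2) \leq \gamma\phi(k_1) + (1-\gamma)\phi(k_2)$. So $\phi$ is a convex extended-real function, and $G_{s_0}$ is its epigraph, up to boundary points. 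The tradeoff curve, meaning the Pareto-minimal points of the slice, lies on the graph of $\phi$. This is the precise sense in which the curve is convex.

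The main obstacle is a definitional mismatch. Definition~\ref{def:tradeoff_surface} builds the curve from the Pareto boundary of the full three-dimensional set, whereas $\phi$ is defined from the two-dimensional slice. A point that is Pareto-minimal within the slice need not be Pareto-minimal in $F_T(\theta; X_n)$, because the space coordinate might also be reducible. I would handle this by defining the tradeoff curve as the lower boundary of the slice $G_{s_0}$, and then noting that every point of $\partial F_T(\theta; X_n)$ with $s = s_0$ lies on the graph of $\phi$.

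Attainment of the infimum, that is, closedness of the set, is a further subtlety. I would avoid it by stating convexity as convexity of the function $\phi$ rather than of a closed curve.
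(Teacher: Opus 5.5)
Your proposal is correct and follows the paper's route: slice the convex set $F_T(\theta;X_n)$ by a coordinate plane, then take the lower boundary of the resulting convex planar set. It is more careful than the paper's one-line proof, because the boundary function $\phi$ pins down what ``convex curve'' means, and because you observe that the cross-section of the three-dimensional Pareto boundary $\partial F_T(\theta;X_n)$ need not equal the slice's lower boundary, only lie inside it (hence on the graph of $\phi$), which corrects a conflation in the paper's proof.

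One caveat, inherited from the paper rather than from your argument: like the paper, you take convexity of $F_T(\theta;X_n)$ from Theorem~\ref{thm:frontier_convex}, whose space step applies upward-closure in the wrong direction, since the mixed machine only witnesses space $\max(s_1,s_2)$ and not $\gamma s_1+(1-\gamma)s_2$. As a result, only the TOK-TIME case is actually secured by the coin-flip construction, because there both witnesses already fit in space $s_0$ and so does their mixture; the TOK-SPACE and TIME-SPACE cases rest on that unproved step.
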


\begin{proof}
Fix one resource bound at a constant value, say $s = s_0$. The 
cross-section $\{(k, t) : (k, t, s_0) \in F_T(\theta; X_n)\}$ is a 
convex set in $\mathbb{R}^2$, since the intersection of a convex set 
with a hyperplane is convex. The TOK-TIME tradeoff curve is the lower 
boundary of this convex set, and is therefore convex: moving along it 
from high token cost and low time to low token cost and high time, each 
unit reduction in token cost requires a progressively larger increase in 
time to maintain feasibility at quality $\theta$. The same argument holds 
for TOK-SPACE and TIME-SPACE. $\square$
\end{proof}

\begin{remark}
In practice, tradeoffs are not always smooth: switching between model 
tiers, caching strategies, or hardware configurations introduces discrete 
jumps in token cost, time latency, and space requirements. The convexity 
result is a theoretical idealization that approximates this discrete 
reality as a continuous surface.
\end{remark}

\section{Major Open Problems}\label{Open Problems}

The token complexity framework developed in this paper is a first step 
toward a comprehensive theory of oracle complexity. In developing it, 
we have encountered many open questions; we identify five as central, 
together defining the research agenda for token complexity theory: the 
computability of the quality ceiling, the existence of token lower 
bounds, the universality of frontier shapes across parameterized task 
families, the structure of the price-relative task ordering, and the 
extension of the framework to agentic settings.

\begin{itemize}

\item \textbf{O1 (Quality Ceiling).} Some oracle limitations are 
fundamental: an LLM without retrieval augmentation cannot recover facts 
absent from its training data; an oracle asked to decide halting will 
fail on some inputs by undecidability; an oracle trained on genuinely 
ambiguous data embeds that disagreement permanently in its response 
distribution. In each case, no token budget or query formulation can 
eliminate the error. To formalize this, define the \textbf{quality 
ceiling} of task $T$ with respect to oracle $\mathcal{O}$ as:
\[
\bar{\theta}(T, \mathcal{O}) = \sup_{\mathcal{M}} \, 
\mathbb{E}_{x \sim \mathcal{D}_X}[S(x, \mathcal{M}(x))],
\]
where the supremum is over all polynomial-time AOTMs 
$\mathcal{M} = (M, \mathcal{O})$ for task $T$. By definition, 
$\kappa_T(\theta; \alpha, \beta) = \infty$ for all $\theta > 
\bar{\theta}(T, \mathcal{O})$. The open problem is to establish 
conditions under which $\bar{\theta}(T, \mathcal{O}) < 1$ for natural 
tasks $T$. Concretely: what properties of $T$ and $\mathcal{O}$ 
determine $\bar{\theta}$? Is it computable from a description of the 
task and oracle? When does irreducible error on specific queries 
propagate to a strict ceiling for the task as a whole?

\item \textbf{O2 (Token Lower Bounds).} No non-trivial token lower 
bounds are currently known for natural AI tasks such as question 
answering, code generation, document summarization, or structured 
data extraction. Are there token lower bounds for such tasks, and 
what techniques can establish them?

\item \textbf{O3 (Frontier Universality).} The complexity frontier 
$F_T(\theta)$ is proved to be convex and upward-closed, but its 
specific shape---how steeply the tradeoff curves bend, where the 
tradeoff surface concentrates, how it scales with input size---is 
unknown for any concrete task. A natural object of study is a 
parameterized family of tasks $\{T_n\}_{n \geq 1}$, where $n$ indexes 
input size, analogous to a language family in classical complexity 
theory. Do tasks in the same family---for example, question answering, 
code generation, or summarization tasks parameterized by document 
length---share qualitative frontier shapes as $n$ grows? Is there a 
universality result where frontier shape depends only on coarse 
properties of the family---such as the type of score function, the 
oracle condition, or the input distribution---rather than on 
fine-grained task details? Such a result would make the frontier a 
practical design tool: knowing the frontier shape for one member of a 
family would immediately inform design choices for all members.

\item \textbf{O4 (Structure of the Dominance Order).} The 
price-specific dominance order $\preceq_\kappa^{(\alpha,\beta)}$ 
ranks tasks by token complexity at a fixed cost ratio $\alpha/\beta$. 
The Price-Relativity theorem shows that for tasks whose optimal AOTMs 
have different query-to-response token ratios, the ordering flips 
exactly once as $\alpha/\beta$ varies, at a threshold $K$ determined 
by the token counts. For general tasks, the relationship between 
$\kappa_{T_1}(\theta; \alpha, \beta)$ and $\kappa_{T_2}(\theta; 
\alpha, \beta)$ as a function of $\alpha/\beta$ may be more complex. 
Can the ordering flip multiple times as $\alpha/\beta$ varies? What 
properties of $T_1$ and $T_2$ determine the number of flips and the 
location of the thresholds? And what is the global structure of the 
family of dominance orders over all cost ratios: does it have finite 
complexity, or can arbitrarily intricate ranking patterns arise?

\item \textbf{O5 (Agentic Extension).} Extend the framework to agentic 
tasks where $M$ interacts with an external environment---issuing actions 
such as API calls, file writes, or web navigation---and receives 
observations whose outcomes shape subsequent oracle queries. The 
standard task structure $T = (X, Y, S, \mathcal{D}_X)$ must be 
generalized to account for the action-observation loop, and the 
complexity frontier becomes a reward-cost surface trading off token 
expenditure against cumulative reward. Does this surface retain 
convexity and the other structural properties proved here, or does the 
sequential dependence of actions on observations introduce fundamentally 
new complexity?

\end{itemize}

\section{Conclusion}\label{Conclusion}

We have introduced token complexity as a formal resource measure for 
AI-augmented computing and developed it as a computational theory of 
measuring the cost of using oracles. The AI-Oracle Turing Machine 
model formalizes the boundary between probabilistic computation and 
stochastic oracle access, treating token cost as the fundamental 
measure of oracle calls. Token complexity and the complexity frontier 
together provide not just a vocabulary for measuring oracle cost but 
a principled basis for reasoning about when and how to reduce it.

The results are complementary: the token complexity theorems 
characterize the cost structure of individual tasks, while the 
frontier theorems characterize the geometry of resource tradeoffs 
across all feasible designs. Together they show that AI-augmented 
computing is not just a new programming paradigm but a new complexity 
regime, one where the cost of stochastic oracle interaction introduces 
structure that classical time and space complexity cannot capture.

These results match experience from deployed AI systems: quality 
improvements consistently require more careful token use, small cost 
changes predictably shift cost-optimal designs, and the 
diminishing-returns pattern of the tradeoff surface is observed 
whenever engineers push toward the limits of token optimization.

Token complexity extends classical complexity theory by adding a 
third axis to the time-space plane, with both descriptive and 
prescriptive power. Descriptively, it characterizes the cost 
structure of AI-augmented computing. Prescriptively, it tells 
designers when to reorganize queries, when to maintain state across 
oracle calls, when to decompose tasks differently under different 
costs, and when to accept that no further optimization is possible. 
As AI-augmented computing becomes the norm rather than the 
exception, this third axis will increasingly determine what is 
\textit{feasible}, what is \textit{efficient}, and what is 
\textit{cheap}.

\section*{Acknowledgment}
I am grateful to Allan Guo and Frank Sun at Librum Technologies 
for insightful conversations on token use and costs in 
practice, which helped shape my thinking on token complexity.

\bibliographystyle{plain}
\bibliography{reference}

@book{HomerSelman2011,
  author    = {Homer, Steven and Selman, Alan L.},
  title     = {Computability and Complexity Theory},
  edition   = {2nd},
  publisher = {Springer},
  year      = {2011}
}

@book{DuKo2001,
  author    = {Du, Ding-Zhu and Ko, Ker-I},
  title     = {Problem Solving in Automata, Languages, and Complexity},
  publisher = {John Wiley \& Sons},
  year      = {2001}
}

@book{AroraBarak2009,
  author    = {Arora, Sanjeev and Barak, Boaz},
  title     = {Computational Complexity: A Modern Approach},
  publisher = {Cambridge University Press},
  year      = {2009}
}

@article{HopcroftPaulValiant1977,
  author    = {Hopcroft, John E. and Paul, Wolfgang J. and Valiant, Leslie G.},
  title     = {On Time versus Space},
  journal   = {Journal of the ACM},
  volume    = {24},
  number    = {2},
  pages     = {332--337},
  year      = {1977}
}

@article{Levin1986,
  author  = {Levin, Leonid A.},
  journal = {SIAM Journal on Computing},
  number  = {1},
  pages   = {285--286},
  title   = {Average case complete problems},
  volume  = {15},
  year    = {1986}
}

@inproceedings{Sennrich2016,
  author    = {Sennrich, Rico and Haddow, Barry and Birch, Alexandra},
  booktitle = {The 54th Annual Meeting of the Association for Computational Linguistics},
  pages     = {1715--1725},
  publisher = {ACL},
  title     = {Neural machine translation of rare words with subword units},
  year      = {2016}
}

@article{Kolmogorov1965,
  author  = {Kolmogorov, Andrei Nikolaevich},
  journal = {Problems of Information Transmission},
  number  = {1},
  pages   = {1--7},
  title   = {Three approaches to the quantitative definition of information},
  volume  = {1},
  year    = {1965}
}

@article{Shannon1959,
  author  = {Shannon, Claude Elwood},
  journal = {IRE National Convention Record},
  pages   = {142--163},
  title   = {Coding theorems for a discrete source with a fidelity criterion},
  volume  = {7},
  year    = {1959}
}

@article{Gill1977,
  author  = {Gill, John},
  journal = {SIAM Journal on Computing},
  number  = {4},
  pages   = {675--695},
  title   = {Computational complexity of probabilistic Turing machines},
  volume  = {6},
  year    = {1977}
}

@incollection{Wang1997,
  author    = {Wang, Jie},
  booktitle = {Complexity Theory Retrospective II},
  editor    = {Hemaspaandra, Lane and Selman, Alan},
  pages     = {295--328},
  publisher = {Springer},
  title     = {Average-case computational complexity theory},
  year      = {1997}
}

@article{Wang2025,
  author  = {Wang, Jie},
  journal = {{AI} Matters},
  number  = {3},
  pages   = {8--11},
  title   = {{AI}-oracle machines for intelligent computing},
  volume  = {10},
  year    = {2025}
}

@book{Li2008,
  author    = {Li, Ming and Vit\'{a}nyi, Paul},
  publisher = {Springer},
  title     = {An Introduction to Kolmogorov Complexity and Its Applications. 3rd ed.},
  year      = {2008}
}

@inproceedings{Schuster2012,
  author    = {Schuster, Mike and Nakajima, Kaisuke},
  booktitle = {IEEE International Conference on Acoustics, Speech and Signal Processing},
  pages     = {5149--5152},
  publisher = {IEEE},
  title     = {Japanese and {K}orean voice search},
  year      = {2012}
}

@inproceedings{Kudo2018,
  author    = {Kudo, Taku and Richardson, John},
  booktitle = {The 2018 Conference on Empirical Methods in Natural Language Processing: System Demonstrations},
  pages     = {66--71},
  publisher = {ACL},
  title     = {SentencePiece: A simple and language independent subword tokenizer and detokenizer for neural text processing},
  year      = {2018}
}

@book{Cover2006,
  author    = {Cover, Thomas M. and Thomas, Joy A.},
  publisher = {Wiley},
  title     = {Elements of Information Theory. 2nd ed.},
  year      = {2006}
}

\end{document}